\newcommand{\hide}[1]{}
\newcommand{\cclass}[1]{\textsf{\upshape #1}}
\newcommand{\NP}{\cclass{NP}}
\newcommand{\ac}[1]{\cclass{AC$^{#1}$}}
\newcommand{\circl}{\mathbb{C}}
\newcommand{\function}[2]{\colon #1 \rightarrow #2}
\newcommand{\setdef}[2]{\left\{ \hspace{0.5mm} #1 : \hspace{0.5mm} #2 \right\}}
\newcommand{\Set}[1]{\big\{ #1 \big\}}
\newcommand{\card}[1]{\lvert#1\rvert}
\def\afterthmseparator{.}
\renewcommand{\@begintheorem}[2]{\trivlist
      \item[\hskip \labelsep{\bf #1\ #2\unskip\afterthmseparator}]\em}
\renewcommand{\@opargbegintheorem}[3]{\trivlist
      \item[\hskip \labelsep{\bf #1\ #2\ (#3)\unskip\afterthmseparator}]\em}
\newtheorem{theorem}{Theorem}[section]
\newtheorem{lemma}[theorem]{Lemma}
\newtheorem{corollary}[theorem]{Corollary}
\newtheorem{remark}[theorem]{Remark}
\newtheorem{proposition}[theorem]{Proposition}
\newcommand{\bull}{\mbox{$\;\;\;$\vrule height .9ex width .8ex depth -.1ex}}
\newenvironment{proof}{\par\smallbreak\noindent{\bf Proof.~}}
{\unskip\nobreak\hfill \bull \par\medbreak}
\newcounter{claim}
\renewcommand{\theclaim}{\Alph{claim}}
\newenvironment{claim}{\refstepcounter{claim}%
\par\medskip\par\noindent{\it Claim~\theclaim.~}~\rm}%
{\par\smallskip\par}
\newenvironment{subproof}{\par\noindent{\sl Proof of Claim~\theclaim.~}}%
{$\,\triangleleft$\par\medskip\par}
\newenvironment{bfenumerate}%
{\mbox{}\begin{enumerate}[label=\arabic*.,font=\upshape\bfseries,nolistsep]}{\end{enumerate}}
\newcounter{oq}
\newcommand{\barG}{\overline{G}}
\newcommand{\barH}{\overline{H}}
\newcommand{\barF}{\overline{F}}
\newcommand{\calH}{\ensuremath{\mathcal{H}}}
\newcommand{\calA}{\ensuremath{\mathcal{A}}}
\newcommand{\calK}{\ensuremath{\mathcal{K}}}
\newcommand{\calL}{\ensuremath{\mathcal{L}}}
\newcommand{\calN}{\ensuremath{\mathcal{N}}}
\newcommand{\calU}{\ensuremath{\mathcal{U}}}
\newcommand{\calW}{\ensuremath{\mathcal{W}}}
\newcommand{\calX}{\ensuremath{\mathcal{X}}}
\newcommand{\calF}{\ensuremath{\mathcal{F}}}
\newcommand{\bbI}[1]{\mathbb{I}(#1)}
\newcommand{\bbZ}{\mathbb{Z}}
\newcommand{\x}{{\mathrm x}}
\newif\ifdrawcapoints
\tikzset{%
  ca/.is family,
  ca/.search also={/tikz},
  ca,
  cabasestyle/.style={},
  cabase/.initial=1cm,
  castep/.initial=.2cm,
  caanglestep/.initial=1,
  calevelsx/.initial=0,
  calevelsy/.initial=0,
  calevels/.style={calevelsx=#1,calevelsy=#1},
  at calevels/.style={},
  capoints/.is if=drawcapoints,
  every carc/.style={interval},
  start/.initial={},
  end/.initial={},
  startlevel/.initial=1,
  endlevel/.initial=1,
  level/.style={startlevel=#1,endlevel=#1},
  label/.initial={},
  labelpos/.initial=0.5,
  nodepos/.style={labelpos=#1},
  labelangle/.initial={},
  nodeangle/.style={labelangle=#1},
  every label/.style={inner sep=2pt},
  startlabel/.initial={},
  startlabelsep/.initial=\CAcirclesep,
  startlabelanchoroffset/.initial=-90,
  startlabelpos/.is choice,
  startlabelpos/cw/.style={startlabelanchoroffset=+90,startlabelsep=\CAcirclesep,startlabelposauto=false},
  startlabelpos/ccw/.style={startlabelanchoroffset=-90,startlabelsep=\CAcirclesep,startlabelposauto=false},
  startlabelpos/outside/.style={startlabelanchoroffset=+180,startlabelsep=1pt+1.75\pgflinewidth,startlabelposauto=false},
  startlabelpos/inside/.style={startlabelanchoroffset=,startlabelsep=1pt+1.75\pgflinewidth,startlabelposauto=false},
  startlabelposauto/.initial=true,
  every startlabel/.style={inner sep=2pt},
  endlabel/.initial={},
  endlabelsep/.initial=\CAcirclesep,
  endlabelanchoroffset/.initial=+90,
  endlabelpos/.is choice,
  endlabelpos/cw/.style={endlabelanchoroffset=+90,endlabelsep=\CAcirclesep,endlabelposauto=false},
  endlabelpos/ccw/.style={endlabelanchoroffset=-90,endlabelsep=\CAcirclesep,endlabelposauto=false},
  endlabelpos/outside/.style={endlabelanchoroffset=+180,endlabelsep=1pt+1.75\pgflinewidth,endlabelposauto=false},
  endlabelpos/inside/.style={endlabelanchoroffset=,endlabelsep=1pt+1.75\pgflinewidth,endlabelposauto=false},
  endlabelposauto/.initial=true,
  every endlabel/.style={inner sep=2pt},
}
\newenvironment{camodel}[1][]{
  \begin{scope}[ca,#1]
    \path[ca,cabasestyle] (0,0) circle(\pgfkeysvalueof{/tikz/ca/cabase});
    \path[ca,at calevels] (0,0) ellipse
    (\pgfkeysvalueof{/tikz/ca/cabase}+\pgfkeysvalueof{/tikz/ca/calevelsx}*\pgfkeysvalueof{/tikz/ca/castep}
    and \pgfkeysvalueof{/tikz/ca/cabase}+\pgfkeysvalueof{/tikz/ca/calevelsy}*\pgfkeysvalueof{/tikz/ca/castep});
  }{
  \end{scope}
}
\newcommand{\CArc}[1]{%
  \global\def\CAinterval{}%
  \begin{scope}[interval/.append code={\global\def\CAinterval{true}},ca,#1]%
    \pgfkeys{/tikz/ca/cabase/.get=\CAbase}%
    \pgfkeys{/tikz/ca/castep/.get=\CAstep}%
    \pgfkeys{/tikz/ca/start/.get=\CAstart}%
    \pgfkeys{/tikz/ca/end/.get=\CAend}%
    \pgfkeys{/tikz/ca/startlevel/.get=\CAstartlevel}%
    \pgfkeys{/tikz/ca/endlevel/.get=\CAendlevel}%
    \pgfkeys{/tikz/ca/label/.get=\CAlabel}%
    \pgfmathparse{\CAbase+\CAstep*\CAstartlevel}%
    \dimdef\CAstartradius{\pgfmathresult pt}%
    \pgfmathparse{\CAbase+\CAstep*\CAendlevel}%
    \dimdef\CAendradius{\pgfmathresult pt}%
    \pgfmathparse{\CAstart*\pgfkeysvalueof{/tikz/ca/caanglestep}}%
    \let\CAstart\pgfmathresult
    \pgfmathparse{\CAend*\pgfkeysvalueof{/tikz/ca/caanglestep}}%
    \let\CAend\pgfmathresult
    \ifdimgreater{\CAend pt}{\CAstart pt}{%
      \edef\tmpa{\pgfkeysvalueof{/tikz/ca/startlabelposauto}}%
      \ifdefstring{\tmpa}{true}{%
        \tikzset{ca,startlabelpos=cw}%
      }{}%
      \edef\tmpa{\pgfkeysvalueof{/tikz/ca/endlabelposauto}}%
      \ifdefstring{\tmpa}{true}{%
        \tikzset{ca,endlabelpos=ccw}%
      }{}%
    }{}%
    \pgfmathparse{\CAstart\pgfkeysvalueof{/tikz/ca/startlabelanchoroffset}}%
    \let\CAstartlabelanchor\pgfmathresult
    \pgfmathparse{\CAend\pgfkeysvalueof{/tikz/ca/endlabelanchoroffset}}%
    \let\CAendlabelanchor\pgfmathresult
    \edef\CAnode{\pgfkeysvalueof{/tikz/ca/labelangle}}%
    \ifdefempty{\CAnode}{%
      \edef\CAnode{\pgfkeysvalueof{/tikz/ca/labelpos}}%
      \pgfmathparse{\CAstart+(\CAend-\CAstart)*\CAnode}%
      \let\CAnode\pgfmathresult
    }{%
      \edef\tmpa{\pgfkeysvalueof{/tikz/ca/caanglestep}}%
      \pgfmathparse{\CAnode*\tmpa}%
      \let\CAnode\pgfmathresult
    }%
    \ifdimequal{\CAstart pt}{\CAend pt}{%
      \dimdef\CAcirclesep{1pt+1.75\pgflinewidth}%
      \fill (\CAstart:\CAstartradius)
      circle (1pt+1.75\pgflinewidth);
      \ifdefempty{\CAlabel}{}{%
        \ifcsstring{tikz@auto@anchor@direction}{left}{%
          \path
          (canvas polar cs:angle=\CAstart+1,radius=\CAstartradius-1pt-1.75\pgflinewidth) --
          (canvas polar cs:angle=\CAstart-1,radius=\CAstartradius-1pt-1.75\pgflinewidth)
          node[midway,auto,ca,every label] {\CAlabel};
        }{%
          \path
          (canvas polar cs:angle=\CAstart+1,radius=\CAstartradius+1pt+1.75\pgflinewidth) --
          (canvas polar cs:angle=\CAstart-1,radius=\CAstartradius+1pt+1.75\pgflinewidth)
          node[midway,auto,ca,every label] {\CAlabel};
        }%
      }%
    }{%
      \dimdef\CAcirclesep{0pt}%
      \ifdimequal{\CAstartradius}{\CAendradius}{%
        \draw[ca,every carc] (\CAstart:\CAstartradius)
        arc[start angle=\CAstart,end angle=\CAend,radius=\CAstartradius];
      }{%
        \draw[ca,every carc,-] plot[smooth,variable=\t,domain=\CAstart:\CAend]
        ({cos(\t)*(\CAstartradius+(\t-\CAstart)/(\CAend-\CAstart)*(\CAendradius-\CAstartradius))},
        {sin(\t)*(\CAstartradius+(\t-\CAstart)/(\CAend-\CAstart)*(\CAendradius-\CAstartradius))});
        \ifdefempty{\CAinterval}{}{%
          \draw (\CAstart:\CAstartradius-1pt-1.25\pgflinewidth) --
          (\CAstart:\CAstartradius+1pt+1.25\pgflinewidth);
          \draw (\CAend:\CAendradius-1pt-1.25\pgflinewidth) --
          (\CAend:\CAendradius+1pt+1.25\pgflinewidth);
        }
      }%
      \pgfmathparse{\CAstartradius+(\CAnode-\CAstart)/(\CAend-\CAstart)*(\CAendradius-\CAstartradius)}%
      \edef\CAnoderadius{\pgfmathresult pt}%
      \ifdefempty{\CAlabel}{}{%
        \path
        (canvas polar cs:angle=\CAnode+1,radius=\CAnoderadius) --
        (canvas polar cs:angle=\CAnode-1,radius=\CAnoderadius)
        node[midway,auto,ca,every label] {\CAlabel};
      }%
      \ifdrawcapoints
        \fill (canvas polar cs:angle=\CAnode,radius=\CAnoderadius) circle (1pt+1.75\pgflinewidth);
      \fi
    }%
    \path (\CAstart:\CAstartradius) node[outer sep/.expanded=\pgfkeysvalueof{/tikz/ca/startlabelsep},anchor/.expanded=\CAstartlabelanchor,ca,every startlabel] {\pgfkeysvalueof{/tikz/ca/startlabel}};
    \path (\CAend:\CAendradius) node[outer sep/.expanded=\pgfkeysvalueof{/tikz/ca/endlabelsep},anchor/.expanded=\CAendlabelanchor,ca,every endlabel] {\pgfkeysvalueof{/tikz/ca/endlabel}};
  \end{scope}%
}
\newcommand{\carc}[5][]{%
  \CArc{start={#2},end={#3},level={#4},label={#5},#1}%
}
\tikzset{interval/.style={interval-interval,shorten >=-.5\pgflinewidth,shorten <=-.5\pgflinewidth}}
\title{Solving the Canonical Representation\protect\linebreak and Star System Problems for\protect\linebreak Proper Circular-Arc Graphs in Logspace}
\author{Johannes Köbler\qquad Sebastian Kuhnert\thanks{Supported by DFG grant  KO 1053/7--1.}\qquad
Oleg Verbitsky\thanks{%
Supported by DFG grant VE 652/1--1.
This work was initiated under support by the Alexander von Humboldt Fellowship.
On leave from the Institute for Applied Problems of Mechanics and Mathematics,
Lviv, Ukraine.}\\[2.5mm]
\normalsize
Humboldt-Universität zu Berlin,
Institut für Informatik\\
\normalsize
Unter den Linden 6,
10099 Berlin, Germany}
\date{}
\begin{document} 

\maketitle

\begin{abstract}
We present a logspace algorithm that constructs a canonical intersection
model
for a given proper circular-arc graph, where \emph{canonical} means that
isomorphic graphs receive identical models. This implies that the recognition
and the isomorphism problems for these graphs are solvable in logspace.
For the broader class of concave-round graphs, which still possess 
(not necessarily proper) circular-arc models, we show that a canonical circular-arc model
can also be constructed in logspace.
As a building block for these results, we design a logspace algorithm
for computing canonical
circular-arc models of circular-arc hypergraphs;
this important class of hypergraphs corresponds to
matrices with the \emph{circular ones property}.

Furthermore, we consider the Star System Problem that
consists in reconstructing a graph from its closed neighborhood hypergraph.
We show that this problem is solvable in logarithmic space
for the classes of proper circular-arc, concave-round, and co-convex graphs.
\end{abstract}

\section{Introduction}

With a family of sets~$\calH$ we associate the \emph{intersection graph}~$\bbI\calH$ on vertex
set~$\calH$ where two sets $A,B\in\calH$ are adjacent if and only if
they have a non-empty intersection. 
We call~$\calH$ an \emph{intersection model} of a graph~$G$
if $G$ is isomorphic to~$\bbI\calH$.
Any isomorphism from~$G$ to~$\bbI\calH$ is called a \emph{representation} of~$G$
by an intersection model.
If $\calH$ consists of 
intervals (resp.\ arcs of a circle),
it is also referred to as an \emph{interval model} (resp.\ an \emph{arc model}).
An intersection model~$\calH$ is \emph{proper} 
if the sets in~$\calH$ are pairwise incomparable by inclusion.
$G$ is called a \emph{(proper) interval graph} if it has a (proper) interval model.
The classes of \emph{circular-arc} and \emph{proper circular-arc} graphs are defined similarly.
Throughout the paper we will use the shorthands \emph{CA} and \emph{PCA}, respectively.

We design a logspace algorithm that for a given PCA graph computes 
a canonical representation by a proper arc model, where \emph{canonical} means that
isomorphic graphs receive identical models.
Note that this algorithm provides a simultaneous solution in logspace
of both the recognition and the isomorphism problems
for the class of PCA graphs.

In~\cite{KoeblerKLV11}, along with Bastian Laubner we gave a logspace
solution for the canonical representation problem of proper interval
graphs.  Though PCA graphs may at first glance appear close relatives
of proper interval graphs, the extension of the result
of~\cite{KoeblerKLV11} achieved here is far from being
straightforward. Differences between the two classes of graphs are
well known and have led to different algorithmic approaches 
also in the past; e.g.\ in~\cite{DengHH96,KaplanN09,LinSS08}.   
One difference, very
important in our context, lies in the relationship of these graph classes to
interval and circular-arc hypergraphs that we will explain shortly.

An \emph{interval hypergraph} is a hypergraph isomorphic to a system of intervals of integers.
A \emph{circular-arc (CA) hypergraph} is defined similarly if, instead of integer intervals,
we consider arcs in a discrete cycle.
With any graph~$G$, we associate its \emph{closed neighborhood hypergraph}
$\calN[G]=\{N[v]\}_{v\in V(G)}$ on the vertex set of~$G$, where for each vertex~$v$
we have the hyperedge~$N[v]$ consisting of~$v$ and all vertices adjacent to~$v$.
Roberts~\cite{Roberts71} 
discovered that $G$~is a proper interval graph if and only if $\calN[G]$ is an
interval hypergraph.
The circular-arc world is more complex.
While $\calN[G]$ is a CA hypergraph whenever $G$ is a PCA graph,
the converse is not always true. PCA graphs
are properly contained in the class of those graphs whose neighborhood hypergraphs are CA.
Graphs with this property are called \emph{concave-round}
by Bang-Jensen, Huang, and Yeo~\cite{Bang-JHY00}
and \emph{Tucker graphs} by Chen~\cite{Chen96}.
The latter name is justified by Tucker's result~\cite{Tucker71}
saying that all these graphs are CA (although not necessarily proper CA).
Hence, it is natural to
consider the problem of constructing arc representations for concave-round graphs.
We solve this problem in logspace and also in a canonical way.

Our working tool is a logspace algorithm for computing a canonical representation of
CA hypergraphs.
This algorithm can also be used to test
in logspace whether a given Boolean matrix has the
\emph{circular ones property}, that is, whether the columns can
be permuted so that the 1-entries in each row form
a segment up to a cyclic shift.
Note that a matrix has this property if and only if it is
the incidence matrix of a CA hypergraph. 
The recognition problem of the circular ones property arises in computational biology, namely in
analysis of circular genomes~\cite{GPZ08,OBS11}.

Our techniques are also applicable to
the \emph{Star System Problem} where, for a given hypergraph~$\calH$,
we have to find a graph~$G$ such that $\calH=\calN[G]$, if such a graph exists.
In the restriction of the problem to a class of graphs~$\mathsf{C}$,
we seek for~$G$ only in~$\mathsf{C}$.
We give logspace algorithms solving the Star System Problem
for PCA and for concave-round graphs.

\break

\subparagraph*{Comparison with previous work.}\mbox{}

\smallskip

\emph{Recognition, model construction, and isomorphism testing.}\quad
The recognition problem for PCA graphs, along with model
construction, was solved in linear time by Deng, Hell, and Huang~\cite{DengHH96}
and by Kaplan and Nussbaum~\cite{KaplanN09};
and in \ac2 by Chen~\cite{Chen97}.
Note that linear-time and logspace results are in general incomparable,
while the existence of a logspace algorithm for a problem implies that it
is solvable in \ac1.
The isomorphism problem for PCA graphs was solved
in linear time by Lin, Soulignac, and Szwarcfiter~\cite{LinSS08}.
In a recent paper~\cite{CurtisLMNSSS13}, Curtis et~al.\ extend 
this result to concave-round graphs.

The isomorphism problem for concave-round graphs
was solved in \ac2 by Chen~\cite{Chen96}.
Circular-arc models for concave-round graphs were known to be constructible
also in \ac2 (Chen~\cite{Chen93}).

Extending these complexity upper bounds to the class of all CA graphs
remains a challenging problem.
While this class can be
recognized in linear time by McConnell's algorithm~\cite{McC03} 
(along with constructing an intersection model),
no polynomial-time isomorphism test for CA graphs is currently known
(see the discussion in~\cite{CurtisLMNSSS13}, where a counterexample
to the correctness of Hsu's algorithm~\cite{Hsu95} is given).
This provides further evidence that CA graphs are algorithmically
harder than interval graphs. For the latter class we have linear-time algorithms for both recognition
and isomorphism due to the seminal work by Booth and Lueker~\cite{BoothL76,LuekerB79},
and a canonical representation algorithm taking logarithmic space is designed in~\cite{KoeblerKLV11}.

The aforementioned circular ones property and the related
\emph{consecutive ones property} (where no cyclic shift is allowed) 
were studied in~\cite{BoothL76,Hsu02,HM03}, where linear-time
algorithms are given;
parallel \ac2\ algorithms were suggested~in~\mbox{\cite{ChenY91,AnnexsteinS98}}.

\medskip

\emph{Star System Problem.}\quad
The decision version of the Star System Problem is in general \NP-complete (Lalonde~\cite{Lalonde81}).
It stays \NP-complete if restricted to non-co-bipartite graphs
(Aigner and Triesch~\cite{AignerT93})
or to $H$-free graphs for~$H$ being a cycle or a path on
at least 5 vertices (Fomin et al.~\cite{FominKLMT11}).
The restriction to co-bipartite graphs has the same complexity as the general graph isomorphism
problem~\cite{AignerT93}.
Polynomial-time algorithms are known for $H$-free graphs for~$H$ being a cycle or a path on
at most 4 vertices~\cite{FominKLMT11} and
for bipartite graphs (Boros et al.~\cite{BorosGZ08}).
An analysis of the algorithms in~\cite{FominKLMT11} for $C_3$- and $C_4$-free
graphs shows that the Star System Problem for these classes is solvable even in logspace,
and the same holds true for the class of bipartite graphs; see~\cite{KKV12}.
Moreover, the problem is solvable in logspace for any logspace-recognizable
class of $C_4$-free graphs, in particular, for 
chordal, interval, and proper interval graphs; see~\cite{KKV12}.

\medskip

A preliminary version of this paper appeared in~\cite{fsttcs}.

\section{Basic definitions}\label{s:preliminaries}

 The vertex set of a graph~$G$ is denoted by~$V(G)$. 
 The \emph{complement of a graph~$G$} is the graph~$\barG$ with
 $V(\barG)=V(G)$
 such that two vertices are adjacent in~$\barG$
 if and only if they are not adjacent in~$G$. 
The set of all vertices at distance at most (resp.\ exactly)~1 from a
 vertex $v\in V(G)$ is called the \emph{closed
   \emph{(resp. \emph{open})} neighborhood} of~$v$ and denoted
 by~$N[v]$ (resp.~$N(v)$).  
Note that $N[v]=N(v)\cup\{v\}$.
We call vertices~$u$ and~$v$ \emph{twins}
 if $N[u]=N[v]$ and \emph{fraternal vertices} if $N(u)=N(v)$.  A
 vertex~$u$ is \emph{universal} if $N[u]=V(G)$. 

The \emph{canonical labeling problem} for a class of graphs~$\mathsf{C}$ is, 
given a graph \mbox{$G\in \mathsf{C}$} with $n$~vertices, to compute a map
$\lambda_G\function{V(G)}{\{1,\ldots,n\}}$ so that the graph~$\lambda_G(G)$, 
the image of~$G$ under~$\lambda_G$
on the vertex set $\{1,\ldots,n\}$, is the same for isomorphic
input graphs.
We say that $\lambda_G$ is
a \emph{canonical labeling} and that $\lambda_G(G)$ is a \emph{canonical form} of~$G$.

Recall that a \emph{hypergraph} is a pair
$(X,\calH)$, where $X$ is a set of vertices and 
$\calH$ is a family of subsets of~$X$, called \emph{hyperedges}.
We will use the same notation~$\calH$ to denote a hypergraph and its hyperedge set
and, similarly to graphs, we will write~$V(\calH)$ referring to the vertex set~$X$
of the hypergraph~$\calH$. 
We will allow \emph{multiple hyperedges}; in this case an isomorphism has to respect multiplicities.

The \emph{complement of a hypergraph~$\calH$} is the hypergraph
$\overline{\calH}=\{\overline{H}\}_{H\in\calH}$ on the same vertex
set, where $\overline{H}=V(\calH)\setminus H$.  Each hyperedge
$\overline{H}$ of~$\overline{\calH}$ inherits the multiplicity of~$H$
in~$\calH$.  With a graph~$G$ we associate two hypergraphs defined on
the vertex set~$V(G)$. The \emph{closed} (resp.\ \emph{open})
\emph{neighborhood hypergraph} of~$G$ is defined by
$\calN[G]=\{N[v]\}_{v\in V(G)}$ (resp.\ by $\calN(G)=\{N(v)\}_{v\in
  V(G)}$).  \emph{Twins in a hypergraph} are two vertices such that
every hyperedge contains either both or none of them. Note that two
vertices are twins in~$\calN[G]$ if
and only if they are twins in~$G$.

Let $X=\{x_1,\dots,x_n\}$. Saying that the sequence $x_1,\dots,x_n$
is \emph{circularly ordered}, we mean that $X$ is endowed with
the (circular successor) relation~$\prec$ under which
$x_i\prec x_{i+1}$ for $i<n$ and $x_n\prec x_1$.
Such a relation~$\prec$ will be referred to as a \emph{circular order} on $X$.
In particular, we will use~$\circl_n$ to denote the initial segment 
of $n$ positive integers with the circular order $1\prec 2\prec \ldots\prec  n\prec 1$.
Note that a circularly ordered set $(X,\prec)$ can be viewed as
a directed cycle.
An ordered pair of elements $a^-,a^+\in X$ determines an \emph{arc}~$A=[a^-,a^+]$
that consists of the points
appearing in the directed path from~$a^-$ to~$a^+$. The elements $a^-$ and~$a^+$
will be referred to as \emph{extreme points} of $A$. This terminology
will be used under the assumption that $A\ne X$, when the extreme points
are uniquely determined by the set $A$. In addition,
the sets $A=\emptyset$ and $A=X$ will be called the \emph{empty arc}
and the \emph{complete arc}, respectively.
A hypergraph~$\calH$ with $V(\calH)=X$ is called 
an \emph{arc system} if all of its hyperedges are arcs.
In this case, the relation $\prec$ will be called a \emph{CA order} of~$\calH$.

An \emph{arc representation of a hypergraph~$\calH$} on $n$ vertices is an isomorphism~$\rho$
from~$\calH$ to an arc system~$\calA$ on~$\circl_n$.
The arc system~$\calA$ is referred to as an \emph{arc model} of~$\calH$.
Hypergraphs having arc representations are called \emph{circular-arc (CA) hypergraphs}.
Note that $\calH$ is a CA hypergraph exactly when it admits a CA order~$\prec$.
Indeed, if $\rho\function{V(\calH)}{\{1,\ldots,n\}}$ is an arc representation of~$\calH$,
we can define $\prec$ by $\rho^{-1}(1)\prec\rho^{-1}(2)\prec\ldots\prec\rho^{-1}(n)\prec\rho^{-1}(1)$.
Conversely, if $v_1\prec v_2\prec\ldots\prec v_n\prec v_1$ is a CA order of~$\calH$,
then $\rho(v_i)=i$ is an arc representation of~$\calH$.

An arc system $\calA$ is \emph{tight} if any two arcs $A=[a^-,a^+]$ and $B=[b^-,b^+]$ in~$\calA$
have the following property: if $A\subseteq B$, then
$a^-=b^-$ or $a^+=b^+$
(note that this condition applies neither to empty nor to complete
arcs that can be in~$\calA$).
A CA order
of~$\calH$ is \emph{tight}, if it makes~$\calH$ a tight arc system.
Furthermore, we call a CA hypergraph \emph{tight} if it admits a tight CA order 
or, equivalently, a tight arc model.
Recognition of tight  CA hypergraphs reduces to recognition of
CA hypergraphs. To see this, given a hypergraph~$\calH$,
define its \emph{tightened hypergraph}~$\calH^\Subset$ by
$
 \calH^\Subset =\calH\cup\setdef{A\setminus B}{A,B\in\calH}.  
$
Then $\calH$ is a tight CA hypergraph if and only if $\calH^\Subset$
is a CA hypergraph (for if $A,B\in\calH$ and $\emptyset\ne B\subseteq A$, 
then $B$~cannot be an inner part of $A$ in any arc model of~$\calH^\Subset$).

The notions of an \emph{interval representation}, an \emph{interval model},
and an \emph{interval order} of a hypergraph 
are introduced similarly to the above, where \emph{interval} means an interval of consecutive integers
within $\{1,\ldots,n\}$. 
Hypergraphs having interval representations are called \emph{interval hypergraphs}.
Since any interval representation is an arc representation, they form a subclass
of CA hypergraphs.

Given a circular order~$\prec$ of a set~$X$, consider the set of
all arcs~$A\subset X$ w.r.t.~$\prec$ excepting the empty arc $\emptyset$ and 
the complete arc $X$. The relation~$\prec$ induces a (lexicographic) circular
order~$\prec^*$ on this set, where $A\prec^*B$ if 
$a^-=b^-$ and $a^+\prec b^+$
or if
$a^-\prec b^-$, $\card{A}=n-1$, and $\card{B}=1$.
The last two conditions say that $A$ is the longest among all arcs with start point $a^-$
and $B$ is the shortest among all arcs with start point~$b^-$.
Let $\calH$ be an arc system such that $\emptyset,V(\calH)\notin\calH$.
By ``restricting'' $\prec^*$ to the hyperedge set $\calH$ we obtain a circular
order~$\prec_\calH$ on~$\calH$: For $A,B\in\calH$ we define
$A\prec_\calH B$ if either $A\prec^*B$ or there exist arcs~$X_1,\dots,X_k\notin\calH$
such that $A\prec^*X_1\prec^*\ldots\prec^*X_k\prec^*B$. We
say that the circular order~$\prec_\calH$ on~$\calH$
is \emph{lifted from} the circular order~$\prec$ on~$V(\calH)$.

An \emph{arc representation of a graph}~$G$ is an
isomorphism~$\alpha\colon V(G)\to\calA$ from~$G$ to the intersection
graph~$\bbI\calA$ of an arc
system~$\calA$ on~$\circl_n$. If~$\emptyset,V(\calA)\notin\calA$ (this always holds
when $G$~has neither an isolated nor a universal vertex), we use the
lifted circular order~$\prec_\calA$ on~$\calA$ to define a circular
order~$\prec_\alpha$ on~$V(G)$, where $u\prec_\alpha v$ if and only if
\mbox{$\alpha(v)\prec_\calA\alpha(u)$.} We call~$\prec_\alpha$ the
\emph{geometric order} on~$V(G)$ associated with~$\alpha$.

\subparagraph*{Roadmap.} %

In Section~\ref{s:CAhgs} we show how to compute a canonical arc representation
for CA hypergraphs in logspace. This procedure will serve as a building
block for our algorithms on PCA and concave-round graphs.
The connections of these classes of graphs to CA hypergraphs are outlined in
Section~\ref{s:linking}. In particular, we make use of the fact that 
the neighborhood hypergraph~$\calN[G]$ of a non-co-bipartite PCA graph~$G$
admits a unique CA
order, which coincides with the geometric order~$\prec_\alpha$
for any proper arc representation~$\alpha$ of~$G$. Based on this, in
Section~\ref{s:graphs} we compute canonical
representations of non-co-bipartite PCA graphs in logspace. To achieve
the same for co-bipartite PCA graphs~$G$ (and all concave-round
graphs), we use the fact that~$\calN(\barG)$ is in this case
an interval hypergraph and 
show how to convert an interval representation
of~$\calN(\barG)$ into an arc representation of~$G$. Finally, in
Section~\ref{s:SSP} we apply the techniques of
Sections~\ref{s:CAhgs}~and~\ref{s:linking} to the Star System Problem.

\section{Canonical arc representations of hypergraphs}\label{s:CAhgs}

In the \emph{canonical representation problem} for CA hypergraphs we
have, for each input hypergraph, to compute its arc representation
such that the resulting arc models are always equal for isomorphic input
hypergraphs.

\begin{theorem}\label{thm:CAhgs}
The canonical representation problem for CA hypergraphs
is solvable in logspace.
\end{theorem}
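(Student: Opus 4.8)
The plan is to reduce the canonical representation problem for CA hypergraphs to the already-solved problem for interval hypergraphs (Theorem~\ref{thm:kklv}), using Tucker's complementation trick mentioned in the introduction. The key structural fact is the following: a hypergraph $\calH$ has an arc ordering iff a suitable associated hypergraph has an interval ordering, where the association is via complementation of some of the hyperedges. Concretely, fix a point (a would-be ``cut'' location) on the prospective circle; an arc either contains that point or not. Arcs not containing the cut point become genuine intervals after we cut the circle open at that point, while arcs containing the cut point become intervals only after we replace them by their complements. This suggests that $\calH$ is CA iff there is a choice of a sub-family $\calH_0\subseteq\calH$ such that $\calH' = (\calH\setminus\calH_0)\cup\Set{\barH : H\in\calH_0}$ is an interval hypergraph, and moreover the arc ordering of $\calH$ and the interval ordering of $\calH'$ determine one another. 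The first step is therefore to make this correspondence precise and verify that the arc orderings of $\calH$ are in bijection (up to the obvious symmetries) with interval orderings of the complemented family $\calH'$.

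**The second step** is to handle the choice of which hyperedges to complement in a canonical, logspace-computable way. The naive approach would guess the cut point and the set $\calH_0$, but there are exponentially many subsets, so I cannot enumerate them. The trick here is that the right family to complement is canonically determined: one should complement the hyperedges relative to a fixed vertex or, better, pass through the dual hypergraph. A cleaner route, and the one I would pursue, is to apply interval-hypergraph machinery to $\calH\cup\overline{\calH}$, the hypergraph together with all complemented hyperedges, with each hyperedge and its complement marked by a color (using the edge-coloring framework of Theorem~\ref{thm:kklv}). An arc ordering of $\calH$ on $\circl_n$ makes each $H$ an arc and each $\barH$ the complementary arc, so both $\calH$ and $\overline{\calH}$ are arc systems simultaneously. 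Cutting at any point turns, for each complementary pair $\{H,\barH\}$, exactly one member into an interval; so in $\calH\cup\overline{\calH}$, after cutting, every pair contributes at least one true interval. The aim is to show that the interval representations of an appropriate colored interval hypergraph derived from $\calH\cup\overline{\calH}$ correspond exactly to the arc representations of $\calH$, so that feeding it to the algorithm of Theorem~\ref{thm:kklv} yields a canonical arc model of~$\calH$.

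**The third step** is to extract a canonical \emph{arc} model from the canonical \emph{interval} model returned by Theorem~\ref{thm:kklv}. Since the interval ordering fixes a linear order on $V(\calH)$, I close it into a circular order by identifying the two endpoints; the hyperedges that were complemented get un-complemented, turning intervals-of-complements back into the arcs that wrap around the cut point. I must check that this closing operation is logspace-computable and, crucially, that isomorphic input hypergraphs $\calH_1\cong\calH_2$ yield \emph{equal} arc models. The latter follows if the reduction to the colored interval instance is itself isomorphism-invariant, so that $\calH_1\cong\calH_2$ forces the derived interval instances to be isomorphic, whence Theorem~\ref{thm:kklv} produces equal interval models, whence the closing operation produces equal arc models.

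**The main obstacle** I expect is the canonicity of the choice of cut point together with the interplay between an arc representation and the freedom of rotation and reflection of the circle: a CA hypergraph may admit an arc ordering and its reverse, and the cut point is not canonically singled out by the hypergraph itself. Thus I must argue that passing through $\calH\cup\overline{\calH}$ removes the arbitrariness (the colored interval instance does not depend on a chosen cut), and that the algorithm of Theorem~\ref{thm:kklv}, being canonical for the colored interval problem, absorbs the residual symmetry so that the final arc model is well-defined and equal on isomorphic inputs. Subsidiary technical points will include correctly handling multiple hyperedges and the degenerate hyperedges $\emptyset$ and $V(\calH)$ (whose complements are again degenerate), and ensuring that the whole pipeline — forming $\calH\cup\overline{\calH}$, coloring, invoking Theorem~\ref{thm:kklv}, and closing the line into a circle — composes within logspace, which is routine since logspace is closed under composition with logspace-computable reductions.
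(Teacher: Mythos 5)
Your overall strategy---Tucker's complementation trick combined with a reduction to the colored interval canonization of Theorem~\ref{thm:kklv}---is the same as the paper's, but the mechanism you propose for achieving canonicity has a genuine gap. You claim that passing to the complementation-closed family $\calH\cup\overline{\calH}$ ``removes the arbitrariness,'' so that ``the colored interval instance does not depend on a chosen cut.'' This is false. To obtain an interval hypergraph from $\calH\cup\overline{\calH}$ you must still pick a cut vertex $x$ and, from each pair $\{H,\barH\}$, keep the member avoiding~$x$; the resulting instances for different~$x$ need not even be isomorphic. For example, take $\calH=\Set{\{1,2\},\{3\}}$ on the circle~$\bbZ_4$ with vertices $1,2,3,4$, so that $\calH\cup\overline{\calH}=\Set{\{1,2\},\{3\},\{3,4\},\{1,2,4\}}$. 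Cutting at $x=1$ yields hyperedges of sizes $1,1,2,2$, while cutting at $x=3$ yields sizes $2,2,3,3$: these colored interval hypergraphs are not isomorphic. Consequently Theorem~\ref{thm:kklv}, which only guarantees equal outputs on \emph{isomorphic} inputs, cannot ``absorb'' this residual freedom, and your pipeline does not produce a well-defined, isomorphism-invariant arc model.

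The fix is the route you dismissed too quickly. Your worry about ``exponentially many subsets'' is unfounded: once a cut vertex~$x$ is fixed, the family to complement is forced (exactly the hyperedges containing~$x$), so there are only $n$ candidate instances $(\calH_x,c_x)$, each computable in logspace, where $c_x$ is a $\{0,1,2\}$-coloring recording whether a hyperedge was complemented and whether its complement also belongs to~$\calH$. The paper canonizes every $(\calH_x,c_x)$ by Theorem~\ref{thm:kklv} and outputs the representation whose canonical interval model is lexicographically least. Canonicity then follows because any isomorphism $\phi$ from $\calH$ to $\calH'$ maps $(\calH_x,c_x)$ isomorphically onto $(\calH'_{\phi(x)},c'_{\phi(x)})$, so the \emph{multiset} of isomorphism types of the instances $(\calH_x,c_x)$, over all $x\in V(\calH)$, is an isomorphism invariant of~$\calH$; hence so is the lexicographically least canonical form, and taking a minimum over $n$ logspace-computable candidates stays in logspace. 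Your remaining ingredients (re-closing the interval into a circle via the inverse of the complementation operation, handling multiplicities, and folding the input edge-coloring into the auxiliary coloring) are sound and match the paper, but without the enumerate-all-cuts-and-minimize step the proof does not go through.
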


\begin{proof}
We prove this result by a logspace reduction to the canonical representation problem
for edge-colored interval hypergraphs, which is already known to be in logspace~\cite{KoeblerKLV11}.
Given a hyperedge~$H$ of a hypergraph~$\calH$, we use notation $\barH=V(\calH)\setminus H$.
Let $\calH$ be an input CA hypergraph with $n$~vertices. For each vertex $x\in V(\calH)$
we construct the hypergraph $\calH_x=\{H_x\}_{H\in\calH}$
on the same vertex set, where $H_x=H$ if $x\notin H$ and $H_x=\barH$ otherwise.
Observe that every~$\calH_x$ is an interval hypergraph; cf.~\cite[Theorem~1]{Tucker71}.
Canonizing each~$\calH_x$ using the algorithm from~\cite{KoeblerKLV11},
we obtain $n$~interval representations \mbox{$\rho_x\function{V(\calH)}{\{1,\ldots,n\}}$};
recall that $V(\calH_x)=V(\calH)$. 
Each $\rho_x$ gives us an arc model $\rho_x(\calH)$ of $\calH$, which
is obtained from the corresponding canonical interval model $\rho_x(\calH_x)$ of $\calH_x$
by complementing the intervals corresponding to complemented hyperedges.
Among these $n$~candidates, we choose the lexicographically least arc model
as canonical and output the corresponding arc representation~$\rho_x$.

There is a subtle point in this procedure: We need to distinguish between
complemented and non-complemented hyperedges when canonizing~$\calH_x$;
otherwise reversing the complementation could lead to non-equal models for
isomorphic CA hypergraphs. For this reason we endow each interval hypergraph~$\calH_x$ with
the edge-coloring $c_x\function{\calH_x}{\Set{0,1,2}}$, where $c_x(H_x)=1$ if
$x\in H$ and $c_x(H_x)=0$ otherwise. If both $H$~and~$\barH$ are present
in~$\calH$, this results in two identical (multi)hyperedges that have different colors;
formally, this hyperedge $H_x=(\barH)_x$ receives a special color~$c_x(H_x)=2$.
\end{proof}

\begin{remark}\rm
In the proof of Theorem \ref{thm:CAhgs} we use the canonical representation algorithm
for edge-colored interval hypergraphs designed in~\cite{KoeblerKLV11}. In fact,
in~\cite{KoeblerKLV11} we consider hypergraphs with multiple hyperedges. Nevertheless,
this captures the case of edge-colored hypergraphs because the colors of hyperedges
can be encoded by integers and regarded as multiplicities.

Note also that Theorem \ref{thm:CAhgs} easily extends to \emph{edge-colored} CA hypergraphs.
This requires just a minor modification of the algorithm: When an input hypergraph~$\calH$ is endowed with
an edge-coloring \mbox{$h\function\calH\bbZ$}, we have to endow
the hypergraphs~$\calH_x$ with edge-coloring $h_x(H_x)=3h(H)+c_x(H_x)$,
where $c_x$~is as in the proof of the theorem.
\end{remark}

Translated into the language of matrices, Theorem \ref{thm:CAhgs}
has algorithmic consequences for testing the circular ones property
that was defined in the introduction.

\begin{corollary}
There is a logspace algorithm that decides whether a given Boolean
matrix has the circular ones property and computes an appropriate
permutation of the columns.
\end{corollary}

The \emph{canonical labeling problem} for a class of hypergraphs~$\mathsf{C}$
is defined exactly as for graphs. 
 Notice a similarity between the pairs of notions 
\emph{canonical labeling/canonical form} and
\emph{canonical representation/canonical model} for CA hypergraphs.
The canonical representation algorithm given by Theorem~\ref{thm:CAhgs} also
solves the canonical labeling problem for CA hypergraphs in
logarithmic space. We conclude this section with noting that it can
also be used to compute a canonical labeling for the duals of CA
hypergraphs; this will be needed in Section~\ref{s:SSP}.

Given a hypergraph~$\calH$ and a vertex $v\in V(\calH)$, let
$v^*=\setdef{H\in\calH}{v\in H}$.  The hypergraph
$\calH^*=\setdef{v^*}{v\in V(\calH)}$ on the vertex set
$V(\calH^*)=\calH$ is called the \emph{dual hypergraph} of~$\calH$
(multiple hyperedges in $\calH$ become twin vertices in
$\calH^*$). The map $\varphi\colon v\mapsto v^*$ is an isomorphism
from~$\calH$ to~$(\calH^*)^*$. If~$\calH^*$ is a CA hypergraph, this map can be combined with a
canonical labeling $\lambda$ of~$\calH^*$ in order to obtain a
canonical labeling $\hat\lambda$ of~$\calH$. More precisely,
$\hat\lambda$~is obtained from the map $\lambda'(v)=\setdef{\lambda(H)}{v\in H}$
by sorting and renaming the values of~$\lambda'$.

\begin{corollary}\label{cor:dual}\mbox{}
The canonical labeling problem for hypergraphs whose duals are CA can be solved in logspace.
\end{corollary}

\section{Linking PCA graphs and tight CA hypergraphs}\label{s:linking}

Bang-Jensen et al.~\cite{Bang-JHY00} call a graph~$G$
\emph{concave-round} (resp.\ \emph{convex-round}) if $\calN[G]$
(resp.~$\calN(G)$) is a CA hypergraph.  Since
$\overline{\calN[G]}=\calN(\barG)$, concave-round and convex-round
graphs are co-classes.  Using this terminology, a result of
Tucker~\cite{Tucker71} says that PCA graphs are concave-round, and
concave-round graphs are CA.

To connect the canonical representation problem for PCA and
concave-round graphs to that of CA hypergraphs, we use the
fact that the graph classes under consideration can be characterized
in terms of neighborhood hypergraphs. For concave-round graphs, this
directly follows from their definition, and
we can find accompanying hypergraphs also for PCA graphs.

\begin{theorem}\label{prop:pca_tight_hg}  
A graph $G$ is PCA if and only if $\calN[G]$ is a tight CA
  hypergraph.
\end{theorem}

The forward direction of Theorem~\ref{prop:pca_tight_hg} follows from
Lemma~\ref{lem:geomistight} below. To prove the other direction, we
distinguish two cases. 
If $\barG$ is not
bipartite, then a result of Tucker~\cite{Tucker71} says that $G$ is a
PCA graph whenever $\calN[G]$ is a CA hypergraph.
The case of bipartite $\barG$ is treated in Section~\ref{s:graphs}
where we show that any tight arc
model for $\calN[G]$ can in this case be transformed into a proper arc model for
$G$. Thus, the proof of Theorem~\ref{prop:pca_tight_hg} 
will be completed in Section~\ref{s:graphs}; note that
we will use this result only later in Section~\ref{s:SSP}.

\begin{lemma}\label{lem:geomistight}
  The geometric order~$\prec_\alpha$ on~$V(G)$ associated with a proper
  arc representation~$\alpha$ of a graph~$G$ is a tight CA order
  for the hypergraph $\calN[G]$.
\end{lemma}
\begin{proof}
  Let~$G$ be a PCA graph and let~$\alpha\colon V(G)\to\calA$ be a
  proper arc representation of~$G$. We first show that the
  neighborhood~$N[u]$ of any vertex $u\in V(G)$ is an arc
  w.r.t.\ to the order~$\prec_\alpha$. If $u$ is universal, the claim is
  trivial. Otherwise, let $\alpha(u)=[a^-,a^+]$. We split~$N(u)$ in
  two parts, namely $N^-(u)=\setdef{v\in N(u)}{a^-\in\alpha(v)}$ and
  $N^+(u)=\setdef{v\in N(u)}{a^+\in\alpha(v)}$.
  Indeed, no vertex $v$ is contained in both $N^-(u)$ and
  $N^+(u)$. Otherwise, since $\calA$ is proper, the arcs
  $\alpha(v)$~and~$\alpha(u)$ would cover the whole cycle, both
  intersecting any other arc~$\alpha(w)$, contradicting the assumption
  that $u$ is non-universal.

  Now let $v\in N^+(u)$ and assume that $u\prec_\alpha
  v_1\prec_\alpha\ldots\prec_\alpha v_k\prec_\alpha v$. 
  We claim that every vertex~$v_i$ is in~$N^+(u)$. Indeed, by
  the definition of~$\prec_\alpha$, we have $\alpha(u)\prec_{\calA}
  \alpha(v_1)\prec_{\calA}\ldots\prec_{\calA}\alpha(v_k)\prec_{\calA}\alpha(v)$.
  If $\alpha(v)=[c^-,c^+]$ and $\alpha(v_i)=[b^-,b^+]$, we see that
  $b^-\in(a^-,c^-)$, $b^+\in(a^+,c^+)$ and, hence,
  $a^+\in[b^-,b^+]$.  It follows that $N^+(u)\cup\{u\}$ is
  an arc starting at~$u$. By a symmetric argument, $N^-(u)\cup\{u\}$
  is an arc ending at~$u$. Hence, also~$N[u]$ is an arc, implying
  that~$\prec_\alpha$ is a CA order for $\calN[G]$.

  It remains to show that the CA order~$\prec_\alpha$ is tight.
  Suppose that $N[u]=[u^-,u^+]\subseteq N[v]=[v^-,v^+]$ and $v$ is
  non-universal with $\alpha(v)=[c^-,c^+]$. Let's first assume that
  $u\in N^+(v)=(v,v^+]$. Since $u,v^+\in N^+(v)$, it follows that
  $c^+\in\alpha(u)\cap\alpha(v^+)$. Hence, $u$~and~$v^+$ are adjacent
  or equal, which implies that $u^+=v^+$. If $u\in[v^-,v)$, a symmetric argument
  shows that $u^-=v^-$.
\end{proof}

Theorem~\ref{prop:pca_tight_hg} suggests that, given a tight CA order
of~$\calN[G]$, we can use it to construct a proper arc model for $G$.
For this we need the converse of Lemma~\ref{lem:geomistight}. In
the case that $\barG$ is not bipartite, the following lemma implies
that indeed each CA order of~$\calN[G]$ is the geometric order of some
proper arc representation of~$G$. 

\begin{proposition}\label{prop:any-unique}
  If $G$ is a connected twin-free PCA graph and $\barG$ is not
  bipartite, then $\calN[G]$ has a unique CA order up to reversing.
\end{proposition}

Proposition~\ref{prop:any-unique} can be derived from a result of
Deng, Hell, and Huang~\cite[Corollary~2.9]{DengHH96}.
An alternative, self-contained proof is given in~\cite[Theorem 3.7.1]{rig-conn}.

We close this section by giving a characterization of concave-round
graphs~$G$ with bipartite complement using properties
of~$\calN(\barG)$.  Given a bipartite graph~$H$ and a bipartition
$V(H)=U\cup W$ of its vertices into two independent sets,
by~$\calN_U(H)$ we denote the hypergraph $\Set{N(w)}_{w\in W}$ on the
vertex set~$U$. Note that $\calH_U(H)$ and $\calN_W(H)$ are dual hypergraphs, i.e.,
$(\calN_U(H))^*\cong\calN_W(H)$.
A bipartite graph~$H$ is called \emph{convex}
if its vertex set admits splitting into two independent sets $U$~and~$W$,
such that $\calN_U(H)$ is an interval hypergraph.
If both $\calN_U(H)$~and~$\calN_W(H)$ are interval hypergraphs, $H$~is called \emph{biconvex}~\cite{Spinrad03}.
As $G$ is co-bipartite concave-round if and only if
its complement $H=\barG$ is bipartite convex-round, 
the following fact gives the desired characterization.

\begin{proposition}[Theorem 2.2 in \cite{Tucker74}]\label{prop:biconv}
  A graph~$H$ is bipartite convex-round if and only if it is biconvex 
and if and only if $\calN(H)$ is an interval hypergraph.
\end{proposition}

\begin{figure}[h]
  \centering\footnotesize
  \begin{tikzpicture}[yscale=.57,sibling distance=4.5cm,
    gclass/.style={inner sep=1.5pt,text height=height("d"),text depth=depth("p")}]
    \newcommand{\gclass}[1]{\textrm{\upshape #1}}
    \node[gclass] (CA) {\gclass{CA}}
    child[level distance=1cm] {
      node[gclass] (TCA) {\gclass{concave-round}}
      child[sibling distance=3.5cm,level distance=1.5cm] {
        node[gclass] (PCA) {\gclass{PCA}}
        child[sibling distance=0cm,level distance=1.2cm] {
          node[gclass] (noncobipPCA) {
            \begin{tabular}{c}\\[-1mm]
              \gclass{non-co-bipartite PCA}\\[-1mm]=\gclass{non-co-bipartite concave-round}
            \end{tabular}
          }
        }
      }
      child[sibling distance=4.5cm,level distance=1.5cm] {
        node[gclass] (cobipTCA) {\gclass{co-bipartite concave-round}}
        child[sibling distance=0pt,level distance=1.5cm] {
          node[gclass] (cobipPCA) {\gclass{co-bipartite PCA}}
        }
        child[sibling distance=6.5cm,level distance=1.5cm] {
          node[gclass] {$\gclass{concave-round}\setminus\gclass{PCA}$}
        }
      }
    }
    child[level distance=1cm] {
      node[gclass] (coconvex) {\gclass{co-convex}}
    };
    \path (cobipTCA) ++(3.25cm,0cm) node[gclass] (cobiconvex) {\gclass{co-biconvex}};
    \path (coconvex) edge (cobipTCA);
    \path (PCA) edge (cobipPCA);
    \path (cobipTCA) -- (cobiconvex) node[midway] {$=$};
  \end{tikzpicture}
  \caption{Inclusion structure of the classes of graphs under consideration.}\label{fig:graph-classes}
\end{figure}

\section{Canonical arc representations of  concave-round and PCA graphs}\label{s:graphs}

We are now ready to present our canonical representation algorithm for concave-round and PCA graphs.
For a given graph, we have to compute its arc representation
such that the resulting arc models are equal for isomorphic input graphs.

\begin{theorem}\label{thm:maingraphs}
  There is a logspace algorithm that solves the canonical arc representation problem 
  for the class of concave-round graphs. Moreover, this algorithm outputs a proper arc
  representation whenever the input graph is PCA.
\end{theorem}

For any class of intersection graphs, a canonical representation
algorithm readily implies a canonical labeling algorithm of the same
complexity.  Vice versa, a canonical representation algorithm readily
follows from a canonical labeling algorithm \emph{and} a
representation algorithm (not necessarily a canonical one). Proving
Theorem~\ref{thm:maingraphs} according to this scheme, we split our
task in two parts: We first compute a canonical labeling~$\lambda$ of
the input graph~$G$ and then we compute an arc representation~$\alpha$
of the canonical form~$\lambda(G)$. Then the
composition~$\alpha\circ\lambda$ is a canonical arc representation
of~$G$. As twins can be easily re-inserted in a (proper) arc
representation, it suffices to compute~$\alpha$ for the twin-free
version of~$\lambda(G)$, where in each twin-class we only keep one
vertex.

We distinguish two cases depending on whether $\barG$ is bipartite;
see Fig.~\ref{fig:graph-classes} for an overview of the involved graph classes.

\subparagraph*{Non-co-bipartite concave-round graphs.}
As mentioned above, any concave-round graph~$G$ whose complement is not
bipartite is actually a PCA graph~\cite{Tucker71}. Hence, we have to
compute a proper arc representation in this case.

\smallskip
\noindent\emph{Canonical labeling.}\quad
We first transform $G$ into its twin-free version~$G'$, where we only
keep one vertex in each twin-class. Let~$n$ be the number of
vertices in~$G'$. We use the algorithm given by Theorem~\ref{thm:CAhgs} to
compute an arc representation~$\rho'$ of~$\calN[G']$. By
Proposition~\ref{prop:any-unique}, $\calN[G']$ has a CA order which is unique
up to reversing. Hence, in order to determine a canonical labeling
of~$G$, it suffices to consider the $2n$~arc
representations~$\rho_1,\dots,\rho_{2n}$ of~$\calN[G]$ that can be
obtained from~$\rho'$ by cyclic shifts and reversing and by
re-inserting all the removed twins. As a canonical labeling~$\rho_i$
of~$G$, we appoint one of these $2n$~variants that gives the
lexicographically least canonical form~$\rho_i(G)$ of~$G$.

\smallskip
\noindent\emph{Proper arc representation.}\quad
As mentioned above, it suffices to find such a representation for the twin-free
graph~$G'$. The arc representation~$\rho'$ of~$\calN[G']$ that we have already
computed provides us with a CA order~$\prec$ for~$\calN[G']$. By
Lemma~\ref{lem:geomistight} and Proposition~\ref{prop:any-unique}, 
there is a proper arc representation~$\alpha\colon V(G')\to\calA$
of~$G'$ such that $\prec$~coincides with the associated geometric
order~$\prec_\alpha$. In order to construct~$\alpha$ from~$\prec$, we can assume
that no two arcs~$\alpha(v)=[a_v^-,a_v^+]$ and~$\alpha(u)=[a_u^-,a_u^+]$
in~$\calA$ share an extreme point and that $V(\calA)$ consists of exactly $2n$ points. 
A suitable circular order on $V(\calA)$ is uniquely determined
by the conditions that the start points~$a_v^-$ appear in the circle according to~$\prec$,
the same holds true for the end points~$a_v^+$, and that each end
point~$a_v^+$ lies between the start point~$a_{v^+}^-$ and the following
start point, where $v^+$ is the end point of the arc~$N[v]$ w.r.t.~$\prec$.
Using this characterization, $\alpha$ can easily be computed in logspace.
Note that the extreme points of $N[v]=[v^-,v^+]$ are well defined
because no vertex $v$ can be universal; otherwise the arcs
containing the extreme points of $\alpha(v)$ would correspond
to two cliques covering the whole vertex set~$V(G')$.

\hide{
In the second approach, we first construct a \emph{tight representation} $\rho$ of $G$
on the cycle $(V(G),\prec)$, that is, a representation producing
a tight arc model. For a vertex $v$, consider the arc $N[v]=[v^-,v^+]$
in this cycle. Set $\rho(v)=[v,v^+]$. 
Note that this is an arc representation of~$G$.
Indeed, suppose that $u$ and $v$ are adjacent. Then
either $u\in [v,v^+]$ or $v\in [u,u^+]$
(otherwise $[v^-,v]$ and $[u^-,u]$ would form a clique cover of $V(G)$
by Claim \ref{cl:CA-order}). 
In either case $\rho(v)\cap\rho(u)\ne\emptyset$.
If $u$ and $v$ are not adjacent, $u\notin [v,v^+]$ and $v\notin [u,u^+]$, which
implies that the arcs $\rho(v)$ and $\rho(u)$ are disjoint.
The arc representation $\rho$ is tight because it is obtained from the arc system
$\calN[G]$ that is tight by Lemma \ref{lem:geomistight} and Proposition \ref{prop:any-unique}.
Finally, any tight arc representation of $G$ can be converted
into a proper arc representation. This has been observed by
Tucker~\cite{Tucker71} and Chen~\cite{Chen97} showed that the conversion
can be implemented in~$\ac{1}$. It is not hard to show that it is even possible
in logspace.
}

\subparagraph*{Co-bipartite concave-round graphs.}
By Proposition~\ref{prop:biconv}, co-bipartite conca\-ve-round graphs are
precisely the co-biconvex graphs. In fact, even all co-convex graphs are
circular-arc (this is implicit in~\cite{Tucker71}) and we can
compute a canonical arc representation actually for this larger class of graphs.

\smallskip
\noindent
\emph{Canonical labeling.}\quad A logspace algorithm for canonical labeling of
convex graphs, and hence also co-convex graphs, is designed
in~\cite{KoeblerKLV11}.

\smallskip\noindent
\emph{(Proper) arc representation.}\quad
We first recall Tucker's argument~\cite{Tucker71} showing that, if the
complement of $G$ is a convex graph, then $G$ is CA.  We can assume that
$\barG$ has no fraternal vertices as those would correspond to twins
in~$G$.

Let $V(G)=U\cup W$ be a partition of~$\barG$ into independent sets such that
$\calN_U(\barG)$ is an interval hypergraph.
Let $u_1,\ldots,u_k$
be an interval order on~$U$ for~$\calN_U(\barG)$.
We construct an arc representation~$\alpha$ for~$G$ on the cycle~$\bbZ_{2k+2}$
(see Fig.~\ref{fig:cobip-concave-round} for an example)
by setting $\alpha(u_i)=[i,i+k]$ for each $u_i\in U$
and $\alpha(w)=[j+k+1,i-1]$ for each $w\in W$, where $N_{\barG}(w)=[u_i,u_j]$ and
the subscript~$\barG$ means that the vertex neighborhood is considered in the complement of $G$.
Note that $\alpha(w)=\bbZ_{2k+2}\setminus\bigcup_{u\in
  N_{\barG}(w)}\alpha(u)$.  In the case that $N_{\barG}(w)=\emptyset$, we
set $\alpha(w)=[0,k]$.  By construction, all arcs~$\alpha(u)$ for
$u\in U$ share a point (even two, $k$~and~$k+1$), the same holds true
for all~$\alpha(w)$ for $w\in W$ (they share the point~$0$), and any pair
$\alpha(u)$~and~$\alpha(w)$ is intersecting if and only if $u$~and~$w$
are adjacent in~$G$. Thus, $\alpha$~is indeed an arc representation
for~$G$.

\begin{figure}
  \centering
  \begin{tikzpicture}[baseline=-2.75cm,x=1.5cm,y=.66cm,v/.style={circle,fill,inner sep=1.5pt}]
    \node[inner sep=0pt] at (-.75,.5) {(a)};
    \node[v,label=left:$1$\strut] (0) at (0,0) {};
    \node[v,label=left:$2$\strut] (1) at (0,-1) {};
    \node[v,label=left:$3$\strut] (2) at (0,-2) {};
    \node[v,label=left:$4$\strut] (3) at (0,-3) {};
    \node[v,label=right:$a$\strut] (a) at (1,0) {} edge (0) edge (1) edge (2);
    \node[v,label=right:$b$\strut] (b) at (1,-1) {} edge (1) edge (2);
    \node[v,label=right:$c$\strut] (c) at (1,-2) {} edge (1) edge (2) edge (3); 
    \node[v,label=right:$d$\strut] (d) at (1,-3) {} edge (3); 
  \end{tikzpicture}\hfill
  \begin{tikzpicture}[y=.175cm,x=.7cm]
    \node[inner sep=0pt] at (-2.5,11.5) {(b)};
    \foreach \x in {0,...,4} {
      \draw[line width=6pt,line cap=round,color=gray!20!white] (\x,-5.5) to
      (\x,4.5) node[black,anchor=south,inner sep=1pt] {\small$\x$};
      \draw[line width=6pt,line cap=round,color=gray!20!white] (\x+6,-5.5) to
      (\x+6,4.5) node[black,anchor=south,inner sep=1pt] {\numdef\xp{\x+5}\small$\xp$};
    }
    \begin{scope}[every node/.style={right,font=\scriptsize,scale=.85,transform shape,inner sep=1.5pt},inode/.style={at end},gray]
      \fill           (4,0) circle (1.5pt) node[anchor=170] {$N(d)$};
      \draw[interval] (2,1) -- (4,1) node[inode] {$N(c)$};
      \draw[interval] (2,2) -- (3,2) node[inode,anchor=190] {$N(b)$};
      \draw[interval] (1,3) -- (3,3) node[inode,anchor=180] {$N(a)$};
    \end{scope}
    \begin{scope}[xshift=5*.7cm,every node/.style={transparent,left},inode/.style={},gray]
      \fill           (4,0) circle (1.5pt) node {$N(d)$};
      \draw[interval] (2,1) -- (4,1) node[inode] {$N(c)$};
      \draw[interval] (2,2) -- (3,2) node[inode] {$N(b)$};
      \draw[interval] (1,3) -- (3,3) node[inode] {$N(a)$};
    \end{scope}
    \begin{scope}[interval,lnode/.style={font=\scriptsize,inner sep=1pt,scale=.85,transform shape}]
      \foreach \x in {1,...,4} {
        \draw (\x,-\x-1) -- (\x+5,-\x-1) node[at start,anchor=15,lnode] {$\alpha(\x)$};
      }
      \draw[rounded corners=.525cm] (9,3) -| (11.25,9) -- (-1.25,9) |- (0,3) node[at end,anchor=185,lnode] {$\alpha(a)$};
      \draw[rounded corners=.7cm] (9,2) -| (11.5,10) -- (-1.5,10) |- (1,2) node[at end,anchor=175,lnode] {$\alpha(b)$};
      \draw[rounded corners=.875cm] (10,1) -| (11.75,11) -- (-1.75,11) |- (1,1) node[at end,anchor=180,lnode] {$\alpha(c)$};
      \draw[rounded corners=1.05cm] (10,0) -| (12,12) -- (-2,12) |- (3,0) node[at end,anchor=180,lnode] {$\alpha(d)$};
    \end{scope}
  \end{tikzpicture}
  \caption{(a)~The complement~$\barG$ of a co-bipartite concave-round graph~$G$
    with the bipartition $U=\{0,1,2,3,4\}$ and $W=\{a,b,c,d,e\}$. (b)~An
    interval order of~$\calN_U(\barG)$ (two copies of which are depicted in
    gray) is used to construct an arc representation~$\alpha$ for~$G$ on the circle $\bbZ_{10}$
    (depicted in black); see the text for
    details.}\label{fig:cobip-concave-round}
\end{figure}
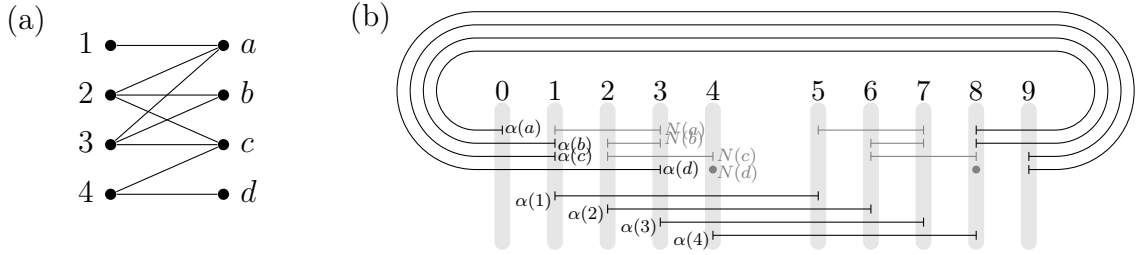

In order to compute~$\alpha$ in logspace, it suffices to compute a
suitable bipartition $\{U,W\}$ of~$\barG$ and an interval order of
the hypergraph~$\calN_U(\barG)$ in logspace.  Finding a bipartition
$\{U,W\}$ such that $\calN_U(\barG)$ is an interval hypergraph can be
done by splitting $\barG$ into connected components $H_1,\dots,H_k$
(using Reingold's algorithm~\cite{Reingold08}) and finding such a
bipartition $\{U_i,W_i\}$ for each component $H_i$. By using the
logspace algorithm of~\cite{KoeblerKLV11} we can actually
compute interval orders of the hypergraphs~$\calN_{U_i}(H_i)$ which
can be easily pasted together to give an interval order
of~$\calN_U(\barG)$.
\hide{
To this end, we first use the Reingold algorithm \cite{Reingold08} 
to split $H$ into connected components $H_1,\ldots,H_m$, each with bipartition $V(H_i)=U_i\cup W_i$.
Then we run the algorithm of \cite{KoeblerKLV11} on each of the hypergraphs
$\calN_{U_i}(H_i)$ and $\calN_{W_i}(H_i)$, computing interval orders for all those
hypergraphs which are interval. Swapping the notation $U_i$ and $W_i$ whenever needed,
we obtain interval orders for all $\calN_{U_i}(H_i)$ and merge them, obtaining
an interval order of $\calN_U(H)$, where $U=\bigcup_{i=1}^mU_i$.
Thus, we have a logspace algorithm constructing an arc representation of a given 
twin-free co-convex graph. 
}
Together with the canonical labeling algorithm this implies that
the canonical arc representation problem for co-convex graphs and, in particular,
for co-bipartite concave-round graphs is solvable in logspace. 

It remains to show that for co-bipartite PCA graphs we can
actually compute a proper arc representation in logspace. 
The existence of such a representation will also complete the
proof of Theorem~\ref{prop:pca_tight_hg} stated in Section~\ref{s:linking}.
As above, we assume
that $G$ is twin-free.  By Lemma~\ref{lem:geomistight}, the
hypergraph~$\calN[G]$ has a tight CA order $\prec$.  We can compute
$\prec$ in logspace by running the algorithm given by
Theorem~\ref{thm:CAhgs} on the tightened
hypergraph~$(\calN[G])^\Subset$.
Any tight CA order of~$\calN[G]$ is also a tight CA order
of~$\calN(\barG)$. Let $V(G)=U\cup W$ be a bipartition of~$\barG$ into
two independent sets.  Note that the restriction of a tight CA
order of~$\calN(\barG)$ to~$\calN_U(\barG)$ is a tight interval
order of the interval hypergraph~$\calN_U(\barG)$.  Retracing
Tucker's construction of an arc representation~$\alpha$ for a
co-convex graph~$G$ (which is outlined above) in the case that the interval order
of~$\calN_U(\barG)$ is tight, we see that~$\alpha$ now gives us a
tight arc model for~$G$.
\hide{
Indeed, the arc system $\Set{\alpha(u)}_{u\in U}$ is proper
by construction. Furthermore, the arc system $\Set{\alpha(w)}_{w\in W}$
is tight; this follows by construction from the tightness of $\calN_U(H)$.
Finally, any two arcs $\alpha(u)$ for $u\in U$ and $\alpha(w)$ for $w\in W$
are incomparable under inclusion. Indeed, 
$\alpha(w)\not\subset\alpha(u)$ because
$\alpha(w)$ contains $0$ while $\alpha(u)$ does not.
To see that $\alpha(u)\not\subset\alpha(w)$, notice that,
if $N_H(w)=\emptyset$, then
$\alpha(w)$ contains exactly one extreme point of each $\alpha(u)$, and
if $N_H(w)\ni u'$, then 
$\alpha(w)\cap\alpha(u')=\emptyset$ while $\alpha(u')$ contains
one extreme point of each $\alpha(u)$.
}
Note that, by construction, this model contains no complete arc.
It remains to note that any tight $\alpha$ with this property
can be converted into a proper arc
representation $\alpha'$. Tucker~\cite{Tucker71} described such a
transformation, and Chen~\cite{Chen97} observed
that it can be implemented in~$\ac{1}$. A straightforward inspection shows that it
can even be done in logspace.    
This completes the proof of
Theorem~\ref{thm:maingraphs} and we have additionally proved the
following corollary.

\begin{corollary}\label{thm:coconvex}
  The canonical arc representation problem for co-convex graphs is
  solvable in logspace.
\end{corollary}

\section{Solving the Star System Problem}\label{s:SSP}

In this section, we present logspace algorithms for the 
Star System Problem: Given a hypergraph~$\calH$, find a
graph~$G$ in a specified class of graphs~$\mathsf{C}$ such that
$\calN[G]=\calH$ (if such a graph exists). 
The term \emph{star} refers to the closed neighborhood of a vertex in $G$.
In this terminology, the problem is to identify the center of each star $H$
in the star system $\calH$.
To denote this problem,
we use the abbreviation \emph{SSP}. Note that a logspace
algorithm $\calA$ solving the SSP for a class~$\mathsf{C}$ cannot be
directly used for solving the SSP for a subclass~$\mathsf{C}'$
of~$\mathsf{C}$: If $\calA$ on input~$\calH$ outputs a
solution $G$ in~$\mathsf{C}\setminus\mathsf{C}'$, then we don't know
whether there is another solution $G'$ in~$\mathsf{C}'$. However, if
the SSP for~$\mathsf{C}$ has unique solutions and if membership
in~$\mathsf{C}'$ is decidable in logspace, then it is easy to convert
$\calA$ into a logspace algorithm $\calA'$ solving the~SSP
for~$\mathsf{C}'$.

\begin{theorem}\label{thm:SSP}
  \begin{bfenumerate}
  \item 
The SSP for PCA and for co-convex graphs is solvable in logspace.
\item 
If $G$ is a co-convex graph, then
$N[G]\cong N[G']$ implies $G\cong G'$.
  \end{bfenumerate}
\end{theorem}

The implication stated in Theorem~\ref{thm:SSP}.2 is known to be true
also for concave-round graphs (Chen~\cite{Chen96}).  As a consequence,
since concave-round graphs form a logspace decidable subclass of the
union of PCA and co-convex graphs, we can also solve the SSP for
concave-round graphs in logspace.

The proof of Theorem~\ref{thm:SSP} is given in the rest of this
section. We design logspace algorithms
$\calA_1$ and $\calA_2$ solving the SSP for non-co-bipartite PCA
graphs and for co-convex graphs, respectively. Since by
Theorem~\ref{thm:SSP}.2, the output of $\calA_2$ is unique up to
isomorphism, we can easily combine the two algorithms to obtain a
logspace algorithm $\calA_3$ solving the SSP for all PCA graphs: On
input~$\calH$ run $\calA_1$ and $\calA_2$ and check if one of
the resulting graphs is PCA (recall that co-bipartite PCA graphs are
co-convex; see Fig.~\ref{fig:graph-classes}).

Clearly, it suffices to consider the case that the input
hypergraph~$\calH$ is connected. 

\subparagraph*{Non-co-bipartite PCA graphs.}
Let~$\calH$ be the given input hypergraph and assume that
$\calH=\calN[G]$ for a PCA graph~$G$. By
Theorem~\ref{prop:pca_tight_hg}, $\calH$ has to be a tight CA
hypergraph, a condition that can be checked by testing if the
tightened hypergraph~$\calH^\Subset$ is CA. Since~$G$ is
concave-round, Proposition~\ref{prop:biconv} implies that $G$ is
co-bipartite if and only if $\calN(\barG)=\overline{\calH}$ is an
interval hypergraph.  It follows that the SSP on~$\calH$ can only have
a non-co-bipartite PCA graph as solution if $\calH^\Subset$ is CA and
$\overline{\calH}$ is not interval.  Both conditions can be checked in
logspace using the algorithms given by Theorem~\ref{thm:CAhgs}
and~\cite{KoeblerKLV11}.  Further, it follows by
Theorem~\ref{prop:pca_tight_hg} and Proposition~\ref{prop:biconv} that in
this case any SSP solution for~$\calH$ is a non-co-bipartite
PCA graph (which is also connected because $\calH$ is assumed to be
connected).

By considering the quotient hypergraph with respect to twin-classes, 
we can additionally assume that $\calH$ is twin-free.

In order to reconstruct $G$ from $\calH$, we have to choose the center
in each star $H\in\calH$.
The following lemma considerably restricts this choice.

\begin{lemma}\label{lem:order-iso}
  Let $G$ be a connected, non-co-bipartite and twin-free PCA graph and
  let $\prec$ be a circular order on~$V(G)$ that is a CA order
  of~$\calN[G]$. Then $u\prec v$ holds exactly when
  $N[u]\prec_{\calN[G]} N[v]$, where $\prec_{\calN[G]}$~is the
  circular order on~$\calN[G]$ lifted from~$\prec$.
\end{lemma}

\begin{proof}
First of all, note that the circular order $\prec_{\calN[G]}$
on~$\calN[G]$ is correctly defined because a non-co-bipartite PCA graph 
has no universal vertex (we observed this fact in Section \ref{s:graphs}).
By the same reason we can use the notation
$N[u]=[u^-,u^+]$ w.r.t.~$\prec$. 

\begin{claim}\label{cl:CA-order}
For any vertices $u,v\in V(G)$, the following conditions are met.
  \begin{enumerate}
  \item $u$ divides $N[u]=[u^-,u^+]$ into two parts
     $[u^-,u]$ and $[u,u^+]$ that both are cliques in~$G$.
  \item $v\in [u,u^+]$ if and only if $u\in [v^-,v]$.
  \item If $v\in [u,u^+]$, then $v^-\in [u^-,u]\text{\ and\ }u^+\in [v,v^+]$.
  \item If $v\in [u,u^+]$ and $u\prec v$, then $u^-$, $v^-$,
    $u$, $v$, $u^+$, and $v^+$ occur under the
    order~$\prec$ exactly in this circular sequence, where some
    of the neighboring vertices except $u^-$~and~$v^+$ may coincide.
  \end{enumerate}
\end{claim}

\begin{subproof}
  By Lemma~\ref{lem:geomistight} and Proposition~\ref{prop:any-unique}, 
there is a proper arc representation~$\alpha$
of~$G$ such that $\prec$~coincides with the associated geometric
order~$\prec_\alpha$. Parts 1 and 2 will follow from a simple geometric
observation: $v\in [u,u^+]$ if and only if $\alpha(v)$ contains the end point of $\alpha(u)$.
To see this equivalence, it suffices to notice that $\alpha(v)$ cannot
contain both extreme points of $\alpha(u)$; otherwise $\alpha(v)$ and $\alpha(u)$
would cover the entire circle and, hence, both $v$ and $u$ would be universal.

\smallskip

1. $[u,u^+]$ is a clique because all arcs $\alpha(v)$ for $v\in [u,u^+]$
share the end point of $\alpha(u)$. Similarly, $[u^-,u]$ is a clique because 
all arcs $\alpha(v)$ for $v\in [u^-,u]$ share the start point of~$\alpha(u)$.

\smallskip

2. This part is true because $\alpha(v)$ contains the end point of $\alpha(u)$
if and only if $\alpha(u)$ contains the start point of~$\alpha(v)$.

\begin{figure}
  \centering
  \begin{tikzpicture}[baseline=0cm]
    \begin{camodel}[cabase=.75cm,castep=.15cm]
       \node at (-1.5,1.2) {(a)};
      \carc[startlabel=$u^-$,endlabel=$u^+$,startlabelpos=inside,endlabelpos=inside]{180}{0}{1}{}
      \carc[swap,nodeangle=90,every label/.style={inner sep=5pt}]{90}{90}{1}{$u$}
      \draw[line cap=round,line width=3pt] (45:1.05cm) -- (45:1.5cm) node[above right,inner sep=0pt] {$v$};
      \carc{180}{-80}{2}{}
      \carc{100}{0}{3}{}
      \carc{100}{-80}{4}{}
      \carc[dashed]{100}{-190}{5}{}
    \end{camodel}
  \end{tikzpicture}\hfil
    \begin{tikzpicture}[baseline=0cm]
      \begin{camodel}[cabase=.75cm,castep=.15cm]
        \node at (-1.5,1.2) {(b)};
        \carc[startlabel=$u^-$,endlabel=$u^+$,startlabelpos=inside,endlabelpos=inside]{180}{0}{1}{}
        \carc[swap,nodeangle=50,every label/.style={inner sep=3pt}]{50}{50}{1}{$u$}
        \carc[startlabel=$v^-$,endlabel=$v^+$,startlabelpos=outside,endlabelpos=inside]{100}{-80}{2}{}
        \carc[nodeangle=40,every label/.style={inner sep=3pt}]{40}{40}{2}{$v$}
      \end{camodel}
    \end{tikzpicture}\hfil
    \begin{tikzpicture}[baseline=0cm]
      \begin{camodel}[cabase=.75cm,castep=.15cm]
        \node at (-1.5,1.2) {(c)};
        \carc[startlabel=$u^-$,startlabelpos=inside]{180}{50}{1}{}
        \carc[swap,nodeangle=50,every label/.style={inner sep=3pt}]{50}{50}{1}{$u$}
        \carc[endlabel=$v^+$,endlabelpos=inside]{40}{-80}{2}{}
        \carc[nodeangle=40,every label/.style={inner sep=3pt}]{40}{40}{2}{$v$}
      \end{camodel}
    \end{tikzpicture}
  \caption{(a) 
Proof of Claim \protect\ref{cl:CA-order}.3.  The most
    inward arc $[u^-,u^+]$ represents~$N[u]$. The other four arcs show
    possible positions of $[v^-,v^+]=N[v]$, where the outmost, dashed
    variant is actually impossible if $u\prec v$ by Claim
    \protect\ref{cl:CA-order}.4.\quad
(b)-(c) The two cases in the proof of Lemma
      \protect\ref{lem:order-iso}.}\label{fig:abc}
\end{figure}
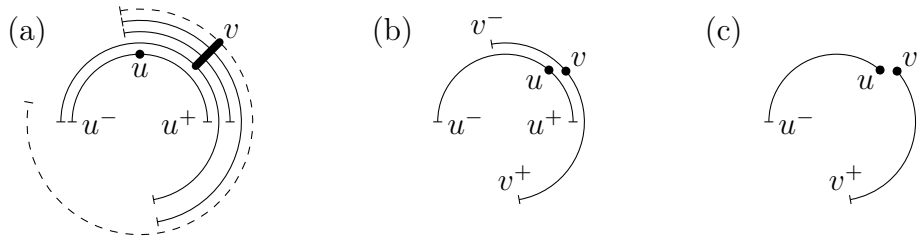

The remaining parts will be deduced from parts 1 and~2.

\smallskip

3. If the two conditions in~part~2 are true, then $v^-$,
    $u$, and $v$ occur in this circular order. Since $[v^-,v]$ is a
    clique, all vertices in $[v^-,u)$ are adjacent to~$u$ and hence,
    $v^-\in [u^-,u]$. The second containment follows
    by symmetry. All possible mutual
    positions of $N[u]$~and~$N[v]$ are shown in
    Fig.~\ref{fig:abc}.a.
  
4. By parts~2~and~3 it follows that $u^-$, $v^-$, $u$, $v$,
    $u^+$, and $v^+$ occur in this circular order, where
    $v^+$~and~$u^-$ may be swapped or can coincide; see Fig.~\ref{fig:abc}.a. To show that the
    condition $u\prec v$ rules out the last two possibilities,
    assume the contrary and note that then $[v,v^+]$ and $[u^-,u]$
would be two cliques covering the vertex set~$V(G)$.
\end{subproof}

  In order to prove the lemma, it suffices to show that $u\prec v$ implies
  $N[u]=[u^-,u^+]\prec_{\calN[G]} N[v]=[v^-,v^+]$.  To this end we
  show that there is no third vertex~$w$ such that the arcs $N[u]$,
  $N[w]$, and~$N[v]$ appear in this sequence under the circular
  order~$\prec_{\calN[G]}$.

  Suppose first that $u$~and~$v$ are adjacent. Then it follows from
  Claim~\ref{cl:CA-order}.4, that the vertices $u^-$, $v^-$, $u$, $v$,
  $u^+$, and, $v^+$ appear in this circular sequence; see
  Fig.~\ref{fig:abc}(b). We split our analysis into three
  cases, depending on the position of~$w$ on the cycle~$(V(G),\prec)$.
  If $w\in(v,v^+]$, then Claim~\ref{cl:CA-order}.3 implies that
  $w^-\in[v^-,v]$ and $v^+\in[w,w^+]$. If $w^-\ne v^-$, then $N[u]$,
  $N[v]$, and~$N[w]$ appear in this sequence
  under~$\prec_{\calN[G]}$. The same holds true if $w^-=v^-$ because
  then the arc $[w^-,w^+]$ has to be longer than the arc $[v^-,v^+]$
(note that, if also $u^-=v^-$, then $[u^-,u^+]$ is shorter than $[v^-,v^+]$).
  The case that $w\in[u^-,u)$ is similar.  If $w\in(v^+,u^-)$, then
  $w^-\in(v,u^-)$, and again $N[w]$ cannot be intermediate.

  Suppose now that $u$~and~$v$ are not adjacent.  
  It follows that $N[u]=[u^-,u]$ and
  $N[v]=[v,v^+]$; see Fig.~\ref{fig:abc}(c).  
  By Claim~\ref{cl:CA-order}.1, both $N[u]$ and $N[v]$ are cliques.
  Again we have
  to show that for no third vertex~$w$, the arcs $N[u]$, $N[w]$,
  and~$N[v]$ appear in this sequence under~$\prec_{\calN[G]}$. This is
  clear if $w^-\in(v,u^-)$.  This is also so if
  $w^-=v$, because then the arc $[v,v^+]$ must be shorter than the arc $[w^-,w^+]$
by Claim~\ref{cl:CA-order}.3.
  Finally, note that the remaining case $w^-\in[u^-,v)$ is
  not possible. Indeed, in this case $v\notin N[w]$, for else the
  non-adjacent vertices $u$~and~$v$ would belong to the clique
  $[w,w^+]$. Hence, it would follow that
  $N[w]=[w^-,w^+]\subsetneq[u^-,u^+]=N[u]$, contradicting the fact that
  $N[u]$ is a clique.
\end{proof}

Lemma~\ref{lem:order-iso} states that the mapping $v\mapsto N[v]$ is an
isomorphism between the two directed cycles~$(V(G),\prec)$
and~$(\calN[G],\prec_{\calN[G]})$. Since there are exactly $n$
such isomorphisms, we get exactly $n$ candidates
$f_1,\dots,f_n$ for the mapping $v\mapsto N[v]$. Hence, all we have to
do is to use the algorithm given by Theorem~\ref{thm:CAhgs} to compute a CA
order~$\prec$ of~$\calH$ and the corresponding lifted order
$\prec_\calH$ in logspace. Now for each isomorphism $f$
between~$(V(\calH),\prec)$ and~$(\calH,\prec_{\calH})$ we have to check if
selecting $v$ as the center of the star $f(v)$ results in a graph~$G$,
that is, if for all $v,u\in V(\calH)$ it holds that~$v\in f(v)$ and that
$v\in f(u)$ exactly when $u\in f(v)$.

\subparagraph*{Co-convex graphs.}
Let~$\calH$ be the given hypergraph and assume that $\calH=\calN[G]$
for a co-convex graph~$G$. To facilitate the exposition, suppose first
that the bipartite complement~$\barG$ is connected,
with vertex partition $U,W$.  Then $
\overline{\calH}=\calN(\barG)=\calN_U(\barG)\cup\calN_W(\barG)$,
where the vertex-disjoint hypergraphs $\calU=\calN_U(\barG)$ and
$\calW=\calN_W(\barG)$ are dual (i.e., $\calU^*\cong\calW$), both
connected, and at least one of them is interval, say,~$\calU$.
Note also that, since $\barG$ is connected, $\overline{\calH}$
has no isolated vertex, that is, every vertex is contained in some hyperedge.
We need a simple auxiliary fact.

\begin{lemma}\label{lem:Ncomps}
  Let $K$ be a graph without isolated vertices and
  let~$\calL$ be a connected component of $\calN(K)$. Denote
  $U=V(\calL)$.  Then either $U$ is an independent set in~$K$ or $U$
  spans a connected component of~$K$.  Moreover, if $U$ is
  independent, then there is a connected component of~$K$ that is a
  bipartite graph with $U$~being one of its vertex classes.
\end{lemma}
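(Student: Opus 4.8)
The plan is to read off the structure of the component~$\calL$ directly from the adjacency structure of~$K$, using the basic fact that for any two vertices $v,w\in V(K)$ the hyperedges $N(v)$ and $N(w)$ satisfy $N(v)\tied N(w)$ if and only if $v$ and~$w$ have a common neighbor in~$K$. First I would record that $K$~has no isolated vertex: since $\calK$~has none, every $x\in V(K)$ lies in some $N(v)$ and hence has a neighbor, so every hyperedge $N(v)$ is nonempty. Writing $S=\setdef{v\in V(K)}{N(v)\in\calL}$ for the set of vertices indexing the hyperedges of~$\calL$, we have $U=\bigcup_{v\in S}N(v)$. If $N(v)\tied N(w)$, then $v$ and~$w$ share a neighbor and so lie in the same connected component of~$K$; propagating this along a $\tied$-path inside the connected hypergraph~$\calL$ shows that all of~$S$ is contained in a single connected component~$C$ of~$K$.

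The heart of the argument is to determine $S$ and~$U$ from~$C$, and here the dichotomy between bipartite and non-bipartite~$C$ appears. Since a single $\tied$-step between $N(v)$ and~$N(w)$ corresponds to a walk of length~$2$ (through a common neighbor) in~$C$, two hyperedges $N(v)$ and~$N(w)$ with $v,w\in V(C)$ lie in the same $\tied$-component exactly when there is an even-length walk from~$v$ to~$w$ in~$C$. If $C$~is non-bipartite, then between any two of its vertices there are walks of both parities (append an odd closed walk), so all hyperedges $N(v)$, $v\in V(C)$, are $\tied$-connected; by maximality of~$\calL$ this forces $S=V(C)$, whence $U=\bigcup_{v\in V(C)}N(v)=V(C)$ and $U$~spans the connected component~$C$ of~$K$. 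If instead $C$~is bipartite with classes $X$ and~$Y$, then all walks between two vertices of the same class are even while those between different classes are odd; consequently $\Set{N(v)}_{v\in X}$ and $\Set{N(v)}_{v\in Y}$ are each $\tied$-connected, and no hyperedge of the first meets a hyperedge of the second (a common neighbor would lie in $X\cap Y=\emptyset$). Hence $\calL$ coincides with one of these two families, say $S=X$.

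In the bipartite case $U=\bigcup_{v\in X}N(v)=Y$: each $N(v)$ with $v\in X$ is contained in~$Y$, and every $y\in Y$, being non-isolated, has a neighbor in~$X$ and is therefore covered. Thus $U=Y$ is an independent set, and $C$~is a connected component of~$K$ that is bipartite with $U=Y$ as one of its vertex classes, which is exactly the ``moreover'' assertion; the non-bipartite case already gave the other alternative, where $U=V(C)$ contains an edge and so is not independent. I expect the main obstacle to be the clean justification of the even-walk criterion for $\tied$-connectivity and of the resulting claim that $\Set{N(v)}_{v\in X}$ is connected while separated from $\Set{N(v)}_{v\in Y}$; the presence of fraternal vertices (equal hyperedges $N(v)=N(w)$, i.e.\ multiplicities in~$\calK$) is harmless here, since such vertices always share a neighbor and hence land in the same class and the same component.
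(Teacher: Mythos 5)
Your proof is correct, and it takes a genuinely different route from the paper's. The paper splits on whether $U$ is independent in~$K$ and argues by local propagation: if $U$ contains an edge $u_1u_2$, then maximality of~$\calL$ gives $N_K(u_1)\cup N_K(u_2)\subseteq U$, and repeating this along every path of the graph component~$K'$ yields $U=V(K')$; if $U$ is independent, the paper brings in the second hypergraph component~$\calL'$ containing a neighbor~$w$ of some $u\in U$, shows that $W=V(\calL')$ is independent as well, and then propagates the alternation $N_K(w)\subseteq U$, $N_K(u)\subseteq W$ along paths to exhibit $K'$ as bipartite with classes $U$ and~$W$. You instead split on whether the graph component~$C$ containing your indexing set~$S$ is bipartite, after isolating a global criterion: $N(v)$ and $N(w)$ lie in the same $\tied$-component iff $v$ and~$w$ are joined by an even walk. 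Both arguments rest on the same elementary observation --- hyperedges of $\calN(K)$ intersect iff their indexing vertices have a common neighbor, i.e., are joined by a walk of length two --- but you convert it into a complete classification of the components of~$\calN(K)$: one component $\Set{N(v)}_{v\in V(C)}$ for each non-bipartite component~$C$ of~$K$, and two components $\Set{N(v)}_{v\in X}$ and $\Set{N(v)}_{v\in Y}$ for each bipartite component with classes $X$ and~$Y$, from which the lemma is read off. The paper's propagation is more elementary, needing no parity bookkeeping and no discussion of multiplicities; your version buys a sharper structural statement (in effect the correspondence with the components of the bipartite double cover of~$K$) and treats both alternatives of the lemma uniformly. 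Your closing remark about fraternal vertices addresses the one place where multiplicities in~$\calK$ could have caused trouble, and disposes of it correctly.
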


\begin{proof}
  If $U$~is not independent in~$K$, it contains at least two adjacent
vertices~$u_1$ and~$u_2$. Let $K'$ denote the connected component of~$K$
containing $u_1$ and~$u_2$. 
By connectedness of~$\calL$, the set~$U$ contains both neighborhoods~$N_K(u_1)$ and~$N_K(u_2)$. 
We can apply this observation to each edge
along any path in~$K'$. It readily follows that
$V(K')\subseteq U$. 
In fact, $V(K')=U$ because otherwise $\calL$
would be disconnected.

Assume now that $U$~is independent in~$K$. 
Consider a vertex $u\in U$ and a vertex~$w$ adjacent to~$u$
in~$K$. Let $\calL'$ be the connected component of~$\calN(K)$ containing~$w$.
As shown above, the set of vertices $W=V(\calL')$ is independent in~$K$
(otherwise $W$ would contain $u$).
By connectedness of~$\calL$ and~$\calL'$, once we have an edge~$uw$ between~$U$ and~$W$, 
we have $N_K(w)\subseteq U$ and $N_K(u)\subseteq W$.
Let $K'$ denote now the connected component of~$K$ containing $u$ and~$w$.
This observation is applicable to each edge along any path in~$K'$.
It follows that $K'$~is bipartite with one vertex class included in~$U$
and the other in~$W$. In fact, the vertex classes of~$K'$ coincide with~$U$ and~$W$ by connectedness of~$\calL$ and~$\calL'$.
\end{proof}

Denote $\calK=\overline{\calH}$ and assume that $\calK=\calN(K)$ for
some graph~$K$, possibly different from $\barG$.
Since $\calK$ has no isolated vertex, $K$ also has none.
Lemma~\ref{lem:Ncomps} implies that either~$K$
is a connected bipartite graph with partition $U,W$ or~$K$ has two
connected components $K_1$ and $K_2$ with $V(K_1)=U$ and $V(K_2)=W$.
However, the second possibility leads to a contradiction. Indeed,
since the hypergraph $\calN(K_1)=\calU$ is interval,
Proposition~\ref{prop:biconv} implies that $K_1$ is bipartite,
contradicting the connectedness of~$\calU$.  Therefore, $K$ must be
connected and bipartite with vertex partition $U,W$.

Recall that the \emph{incidence graph} of a hypergraph~$\calX$ is the
bipartite graph with vertex classes~$V(\calX)$ and~$\calX$ where $x\in
V(\calX)$ and $X\in\calX$ are adjacent if $x\in X$ (if~$X$ has
multiplicity~$k$ in~$\calX$, it contributes $k$~fraternal vertices in
the incidence graph).  Since $K$ is isomorphic to the incidence graph
of the hypergraph~$\calU$ (as well as~$\calW$), $K$ is
reconstructible from~$\calK$ up to isomorphism and, in particular,
$K\cong\barG$.  Thus, the solution to the SSP on~$\calH$ is unique up
to isomorphism.\footnote{%
The uniqueness result of Boros et al.~\cite{BorosGZ08}
implies a somewhat weaker fact, namely the uniqueness up to isomorphism
\emph{within} the class of co-convex graphs.}

After these considerations we are ready to describe our logspace
algorithm for solving the SSP for the class of co-convex co-connected
graphs.  Given a hypergraph~$\calH$, we first check if
$\overline{\calH}$ has exactly two connected components, say
$\calU$~and~$\calW$. This can be done by running Reingold's
reachability algorithm~\cite{Reingold08} on the
intersection graph~$\bbI{\overline{\calH}}$.  If this is not the case,
there is no solution in the desired class.  Otherwise, we construct
the incidence graph~$F$ of the hypergraph~$\calU$ (or of~$\calW$,
which should give the same result up to isomorphism) and take its
complement~$\barF$.  Note that this works well even if $\barF$ has
twins: the twins in~$V(\calU)$ are explicitly present, while the twins
in~$V(\calW)$ are represented by multiple hyperedges in~$\calU$.

As argued above, if the SSP on~$\calH$ has a co-convex co-connected
solution, then the closed neighborhood hypergraph
$\calF=\calN[\barF]$ of~$\barF$ is isomorphic to~$\calH$. However, it may
not be equal to~$\calH$. In this case we compute an
isomorphism~$\varphi$ from~$\calF$ to~$\calH$ or, the same task,
from~$\overline{\calF}$ to~$\overline{\calH}$.  This can be done by
the algorithms of~\cite{KoeblerKLV11} and Corollary~\ref{cor:dual},
because at least one of the connected components of
$\overline{\calF}\cong\overline{\calH}$ is an interval hypergraph and
the other component is isomorphic to the dual of an interval
hypergraph. Now, the isomorphic image $G=\varphi(\barF)$ of~$\barF$ is
the desired solution to the SSP on~$\calH$ as
$\calN[\varphi(\barF)]=\varphi(\calN[\barF])=\calH$.

  If we do not succeed with establishing an isomorphism between~$\calF$ and~$\calH$,
this implies that there is no solution in the desired class. Alternatively,
we could check from the very beginning whether
one of the hypergraphs $\calU$~and~$\calW$ is interval and $\calU^*\cong\calW$.

Consider now the general case when $\calH=\calN[G]$
for a co-convex graph~$G$ with not necessary connected complement $\barG$.
Note that universal vertices of~$G$ are easy to identify in~$\calH$:
those are the vertices contained in every hyperedge of~$\calH$.
We can remove all such vertices from~$\calH$, solve the SSP for the 
reduced hypergraph, and then restore a solution for~$\calH$.
The last step can be done in a unique way. 
We will, therefore, assume that $G$~has no universal vertex or, equivalently,
$\overline{\calH}=\calN(\barG)$ has no isolated vertex.

If $\barG$ consists of~$k$ connected components $H_1,\ldots,H_k$, where $H_i$~is
a bipartite graph with vertex classes~$U_i$ and~$W_i$, then
$\calK=\overline{\calH}$ consists of~$2k$ connected components
$\calU_i=\calN_{U_i}(H_i)$ and $\calW_i=\calN_{W_i}(H_i)$, each pair being dual.
Moreover, it can be supposed that all $\calU_i$ are interval hypergraphs.

Assume that $\calK=\calN(K)$ for any other graph~$K$.
By Lemma~\ref{lem:Ncomps}, for each connected component $\calL\in\Set{\calU_i,\calW_i}_{i=1}^k$
either $V(\calL)\in\Set{U_i,W_i}_{i=1}^k$ spans a connected component of~$K$ or
there is another connected component $\calL'$ such that $V(\calL)\cup V(\calL')$
spans a connected component of~$K$ that is a bipartite graph. 
Note that in the latter case $\calL$ and~$\calL'$ have to be dual hypergraphs, i.e.,
$\calL'\cong\calL^*$.
Recall that, by Proposition~\ref{prop:biconv}, no $U_i$ can alone span a connected
component of~$K$. It readily follows that $K$ consists of~$k$ connected
bipartite components $K_1,\ldots,K_k$, where vertex classes~$Y_i$ and~$Z_i$ of each~$K_i$
span connected components of~$\calK$. Moreover, we can enumerate $K_1,\ldots,K_k$ so that
the components of~$\calK$ spanned by~$Y_i$ and~$Z_i$ are isomorphic to~$\calU_i$ and~$\calW_i$.
Since both $H_i$ and~$K_i$ are isomorphic to the incidence graph
of the hypergraph~$\calU_i$ (as well as $\calW_i$),
the graphs~$K$ and~$\barG$ are isomorphic and
the solution to the SSP on~$\calH$ is unique up to isomorphism.

This analysis suggests the following logspace algorithm solving
the SSP for the class of co-convex graphs without universal vertices.
Given a hypergraph~$\calH$, we first check if $\overline{\calH}$~has an even number of connected components
that can be split into pairs $\calU_i$ and~$\calW_i$ so that $\calU_i$~is an interval hypergraph
and $\calW_i\cong\calU_i^*$. 
This step can be done by using Reingold's algorithm 
and the algorithm of~\cite{KoeblerKLV11}. A desired solution exists if and only if
this is possible.

Note that some of the hypergraphs~$\calW_i$ can also be interval.
Then the set~$\Set{\calU_i}_{i=1}^k$ can be chosen in essentially different 
(non-isomorphic) ways; however, all these choices will give isomorphic outcomes
(as all choices of~$\Set{\calU_i}_{i=1}^k$ are equivalent up to isomorphism
and taking duals).

Then, for each~$i$, we construct the incidence graph~$F_i$ of the hypergraph~$\calU_i$,
form the graph~$F$ as the vertex-disjoint union of all $F_i$, and take its complement~$\barF$.

By the already established uniqueness,
the closed neighborhood hypergraph $\calF=\calN[\barF]$ is isomorphic to~$\calH$.
We find an isomorphism~$\phi$ from~$\calF$ to~$\calH$ or, the same,
from~$\overline{\calF}$ to~$\overline{\calH}$.
We do it componentwise by running the algorithms of~\cite{KoeblerKLV11} and Corollary~\ref{cor:dual}
on the connected components of~$\overline{\calF}$ and~$\overline{\calH}$.
The isomorphic image $G=\phi(\barF)$ is a solution as
$\calN[\phi(\barF)]=\phi(\calN[\barF])=\calH$.

\section{Conclusion}\label{sec:conclusion}

By Theorem~\ref{thm:maingraphs}, there is a logspace algorithm that solves the
canonical arc representation problem for PCA graphs, where the constructed models
are proper. \emph{Unit CA graphs} are CA graphs that admit a PCA model where all
arcs have equal length. 
The unit arc representation problem for such
graphs can be solved in linear time~\cite{LinS08,KaplanN09}. Can it also be
solved in logspace? 
The \emph{unit interval} representation problem is solved in logspace in~\cite{KoeblerKLV11}.

In Section~\ref{s:SSP}, we solve the Star System Problem for PCA graphs and
concave-round graphs in logspace. Is this also possible for other classes of
circular-arc graphs? Furthermore, can one extend the result of
Theorem \ref{thm:SSP}.2 about the uniqueness of a solution to this problem?

In analogy to convex graphs, Liang and Blum~\cite{LB95} call a bipartite
graph~$G$ with vertex classes $U$~and~$V$ \emph{circular convex}, if
$\calN_U(G)=\{N_G(u)\}_{u\in U}$ is a CA hypergraph. We remark that our logspace
algorithm for canonical representation of CA hypergraphs can be used to solve
the canonical labeling problem for circular convex graphs in logspace. Indeed,
the approach of~\cite{KoeblerKLV11} to convert a canonical representation
algorithm for interval hypergraphs into a canonical labeling algorithm for
convex graphs can be easily adapted to this setting.

\subparagraph*{Acknowledgement.}
We thank Bastian Laubner for useful discussions at the early stage of this work.


\begin{thebibliography}{10}


\bibitem{AignerT93}
M.~Aigner and E.~Triesch.
\newblock Reconstructing a graph from its neighborhood lists.
\newblock {\em Combinatorics, Probability {\&} Computing}, 2:103--113, 1993.


\bibitem{AnnexsteinS98}
F.~S. Annexstein and R.~P. Swaminathan.
\newblock On testing consecutive-ones property in parallel.
\newblock {\em Discrete Applied Mathematics}, 88(1-3):7--28, 1998.


\bibitem{Bang-JHY00}
J.~Bang-Jensen, J.~Huang, and A.~Yeo.
\newblock {Convex-round and concave-round graphs.}
\newblock {\em SIAM J. Discrete Math.}, 13(2):179--193, 2000.

\bibitem{BoothL76}
K.~Booth, G.~Lueker.
\newblock Testing for the consecutive ones property, interval graphs, and graph
  planarity using {PQ}-tree algorithms.
\newblock {\em J.\ Comput.\ Syst. Sci.}, 13:335--379,~1976.

\bibitem{BorosGZ08}
E.~Boros, V.~Gurvich, and I.~E. Zverovich.
\newblock Neighborhood hypergraphs of bipartite graphs.
\newblock {\em Journal of Graph Theory}, 58(1):69--95, 2008.


\bibitem{Chen93}
L.~Chen.
\newblock Efficient parallel recognition of some circular arc graphs, {I}.
\newblock {\em Algorithmica}, 9(3):217--238, 1993.

\bibitem{Chen97}
L.~Chen.
\newblock Efficient parallel recognition of some circular arc graphs, {II}.
\newblock {\em Algorithmica}, 17(3):266--280, 1997.

\bibitem{Chen96}
L.~Chen.
\newblock Graph isomorphism and identification matrices: Parallel algorithms.
\newblock {\em IEEE Trans. Parallel Distrib. Syst.}, 7(3):308--319, 1996.

\bibitem{ChenY91}
L.~Chen and Y.~Yesha.
\newblock Parallel recognition of the consecutive ones property with
  applications.
\newblock {\em J. Algorithms}, 12(3):375--392, 1991.

\bibitem{CurtisLMNSSS13}
A.R.~Curtis, M.C.~Lin, R.M.~McConnell, Y.~Nussbaum, F.J.~Soulignac, J.P.~Spinrad, J.L.~Szwarcfiter.
\newblock
Isomorphism of graph classes related to the circular-ones property.
\newblock
{\em Discrete Mathematics \& Theoretical Computer Science}, 15(1):157--182, 2013.


\bibitem{DengHH96}
X.~Deng, P.~Hell, J.~Huang.
\newblock Linear-time representation algorithms for proper circular-arc graphs
  and proper interval graphs.
\newblock {\em SIAM J.\ Comput.}, 25:390--403,~1996.



\bibitem{FominKLMT11}
F.~V.~Fomin, J.~Kratochv\'{\i}l, D.~Lokshtanov, F.~Mancini, and J.~A. Telle.
\newblock On the complexity of reconstructing $H$-free graphs from their Star
  Systems.
\newblock {\em Journal of Graph Theory}, 68(2):113--124, 2011.

\bibitem{GPZ08}
F.~Gavril, R.~Y.~Pinter, and S.~Zaks.
\newblock
Intersection representations of matrices by subtrees and unicycles on graphs.
\newblock
{\em Journal of Discrete Algorithms}, 6(2):216--228, 2008.


\bibitem{Hsu95}
W.-L.~Hsu.
\newblock {$O(MN)$} algorithms for the recognition and isomorphism problems on
  circular-arc graphs.
\newblock {\em SIAM J. Comput.}, 24(3):411--439, 1995.

\bibitem{Hsu02}
W.-L.~Hsu.
\newblock A simple test for the consecutive ones property.
\newblock {\em J. Algorithms}, 43(1):1--16, 2002.

\bibitem{HM03}
W.-L.~Hsu and R.~M.~McConnell.
\newblock {PC} trees and circular-ones arrangements.
\newblock {\em Theoretical Computer Science}, 296(1):99--116, 2003.

\bibitem{KaplanN09}
H.~Kaplan, Y.~Nussbaum.
\newblock Certifying algorithms for recognizing proper circular-arc graphs and
  unit circular-arc graphs.
\newblock {\em Discr.\ Appl.\ Math.}, 157:3216--3230,~2009.



\bibitem{KoeblerKLV11}
J.~K{\"o}bler, S.~Kuhnert, B.~Laubner, and O.~Verbitsky.
\newblock Interval graphs: {C}anonical representations in {L}ogspace.
\newblock {\em SIAM J. on Computing}, 40(5):1292--1315, 2011.

\bibitem{fsttcs}
J.~K{\"o}bler, S.~Kuhnert, and O.~Verbitsky.
\newblock
Solving the canonical representation and Star System problems
for proper circular-arc graphs in logspace.
\newblock In: 
{\em IARCS
Proc.\ of the  Ann.\ Conf.\ on Foundations of Software Technology and Theoretical
  Computer Science}
LIPIcs vol.~18, pp.\ 387--399, 2012. 

\bibitem{KKV12}
J.~K{\"o}bler, S.~Kuhnert, and O.~Verbitsky.
\newblock
Around and beyond the isomorphism problem for interval graphs.
\newblock \emph{Bulletin of the EATCS}, 107:44--71,
2012.

\bibitem{rig-conn}
J.~K{\"o}bler, S.~Kuhnert, and O.~Verbitsky.
\newblock
Circular-arc hypergraphs: Rigidity via connectedness.
\newblock E-print: {\sl http://arxiv.org/abs/1312.1172} (2013).

\bibitem{Lalonde81}
F.~Lalonde.
\newblock Le probleme d'etoiles pour graphes est NP-complet.
\newblock {\em Discrete Mathematics}, 33(3):271--280, 1981.

\bibitem{LB95}
 Y.~D.~Liang, N.~Blum.
 \newblock
 Circular convex bipartite graphs: Maximum matching and Hamiltonian circuits.
 \newblock \emph{Information Processing Letters}, 56(4):215--219, 1995.

\bibitem{LinS08}
M.~C.~Lin and F.~J. Soulignac.
\newblock 
Unit circular-arc graph representations and feasible circulations.
\newblock
{\em SIAM J. on Discrete Mathematics}, 22(1):409--423, 2008.

\bibitem{LinSS08}
M.~C.~Lin, F.~J.~Soulignac, and J.~L.~Szwarcfiter.
\newblock A simple linear time algorithm for the isomorphism problem on proper
  circular-arc graphs.
\newblock 
{\em Proc.\ 11th SWAT}, LNCS vol.\ 5124, pp.\ 355--366, 2008.

\bibitem{LuekerB79}
G.~Lueker and K.~Booth.
\newblock A linear time algorithm for deciding interval graph isomorphism.
\newblock {\em J. ACM}, 26(2):183--195, 1979.

\bibitem{McC03}
R.~M. McConnell.
\newblock Linear-time recognition of circular-arc graphs.
\newblock {\em Algorithmica}, 37(2):93--147, 2003.

\bibitem{OBS11}
 A.~Ouangraoua, A.~Bergeron, and K.~M. Swenson.
 \newblock Theory and practice of ultra-perfection.
 \newblock {\em J. Comp. Bio.}, 18(9): 1219--1230, 2011.



\bibitem{Reingold08}
O.~Reingold.
\newblock Undirected connectivity in log-space.
\newblock {\em J. ACM}, 55(4), 2008.

\bibitem{Roberts71}
F.~Roberts.
\newblock On the compatibility between a graph and a simple order.
\newblock {\em Journal of Combinatorial Theory, Series B}, 11(1):28--38, 1971.

\bibitem{Spinrad03}
J.~Spinrad.
\newblock {\em Efficient graph representations}.
\newblock Fields Institute Monographs, 19. AMS, 2003.


\bibitem{Tucker71}
A.~Tucker.
\newblock {Matrix characterizations of circular-arc graphs}.
\newblock {\em Pac. J. Math.}, 39:535--545, 1971.

\bibitem{Tucker74}
A.~Tucker.
\newblock {Structure theorems for some circular-arc graphs}.
\newblock {\em Discrete Mathematics}, 7(1-2):167--195, 1974.

\end{thebibliography}
\end{document}
